\newtheorem{lemma}{Lemma}
\newcommand{\PreserveBackslash}[1]{\let\temp=\\#1\let\\=\temp}
\newcolumntype{C}[1]{>{\PreserveBackslash\centering}p{#1}}
\newcolumntype{R}[1]{>{\PreserveBackslash\raggedleft}p{#1}}
\newcolumntype{L}[1]{>{\PreserveBackslash\raggedright}p{#1}}
\renewcommand{\maketag@@@}[1]{\hbox{\m@th\normalsize\normalfont#1}}%
\begin{document}
% 激活bst控制：主要包括超过6个以上的作者使用et al.等控制
\bstctlcite{IEEEexample:BSTcontrol} 
% paper title
\title{Movable-Antenna Aided Secure Transmission \\for RIS-ISAC Systems}
\author{Yaodong~Ma,~\IEEEmembership{Graduate Student Member,~IEEE,}
	Kai~Liu,~\IEEEmembership{Member,~IEEE,}
	Yanming~Liu,~\IEEEmembership{Member,~IEEE,}
	Lipeng~Zhu,~\IEEEmembership{Member,~IEEE,} and 
	Zhenyu~Xiao,~\IEEEmembership{Senior Member,~IEEE}
	% <-this % stops a space
	\thanks{This work was supported by the National Nature Science Foundation of China under Grant No. U2233216.  \textit{(Corresponding author: Lipeng Zhu)}
		
		Y. Ma, K. Liu, Y. Liu, and Z. Xiao are with the School of Electronics and Information Engineering, Beihang University, Beijing, 100191, China, and the State Key Laboratory of CNS/ATM, Beijing, 100191, China (e-mail: \{yaodongma, liuk, liuyanming, xiaozy\}@buaa.edu.cn).
		
		L. Zhu is with the Department of Electrical and Computer Engineering, National
		University of Singapore, Singapore 117583 (e-mail: zhulp@nus.edu.sg).
	}
	% <-this % stops a space
	%\thanks{J. Doe and J. Doe are with Anonymous University.}% <-this % stops a space
	\thanks{}}
\vspace{-10mm}
% make the title area
\maketitle
\vspace{-10mm}

\begin{abstract}
Integrated sensing and communication (ISAC) systems have the issue of secrecy leakage when using the ISAC waveforms for sensing, thus posing a potential risk for eavesdropping. To address this problem, we propose to employ movable antennas (MAs) and reconfigurable intelligent surface (RIS) to enhance the physical layer security (PLS) performance of ISAC systems, where an eavesdropping target potentially wiretaps the signals transmitted by the base station (BS). To evaluate the synergistic performance gain provided by MAs and RIS, we formulate an optimization problem for maximizing the sum-rate of the users by jointly optimizing the transmit/receive beamformers of the BS, the reflection coefficients of the RIS, and the positions of MAs at communication users, subject to a minimum communication rate requirement for each user, a minimum radar sensing requirement, and a maximum secrecy leakage to the eavesdropping target. To solve this non-convex problem with highly coupled variables, a two-layer penalty-based algorithm is developed by updating the penalty parameter in the outer-layer iterations to achieve a trade-off between the optimality and feasibility of the solution. In the inner-layer iterations, the auxiliary variables are first obtained with semi-closed-form solutions using Lagrange duality. Then, the receive beamformer filter at the BS is optimized by solving a Rayleigh-quotient subproblem. Subsequently, the transmit beamformer matrix is obtained by solving a convex subproblem. Finally, the majorization-minimization (MM) algorithm is employed to optimize the RIS reflection coefficients and the positions of MAs. Extensive simulation results validate the considerable benefits of the proposed MAs-aided RIS-ISAC systems in enhancing security performance compared to traditional fixed position antenna (FPA)-based systems. 
%In addition, it is revealed that the increase in the size of receive region, the number of paths, and the number of RIS elements can enhance the security of the MA-aided system. (need?)
\end{abstract} %226words

\begin{IEEEkeywords}
Integrated sensing and communication (ISAC), movable antenna (MA), reconfigurable intelligent surface (RIS), physical-layer security (PLS).
\end{IEEEkeywords}

% For peer review papers, you can put extra information on the cover
% page as needed:
% \ifCLASSOPTIONpeerreview
% \begin{center} \bfseries EDICS Category: 3-BBND \end{center}
% \fi
%
% For peerreview papers, this IEEEtran command inserts a page break and
% creates the second title. It will be ignored for other modes.
\IEEEpeerreviewmaketitle

\section{Introduction}
\IEEEPARstart{B}{oth} high-quality communications and high-accuracy sensing capabilities are essential in the six-generation (6G) wireless networks to facilitate  auto-driving, extended reality, and other emerging applications \cite{liu2022integrated}. Therefore, an important paradigm shift is needed from communication-oriented systems to the networks with integrated sensing and communication (ISAC) \cite{zhang2021enabling,zheng2019radar}. 
%sensing-communication integration, which provokes the emergence and development of integrated sensing and communication (ISAC) \cite{zhang2021enabling}. 
%Due to similar hardware platforms, signal processing algorithms, and the same evolution direction towards high-frequency wideband multi-antenna systems, the ISAC systems have evolved from coexistence, cooperation, to co-design \cite{zheng2019radar}. 
By enabling spectrum resource sharing and employing a unified platform for dual-functional waveform transmission to simultaneously perform sensing and communication, ISAC aims at greatly enhancing the spectral and energy efficiency as well as reducing hardware costs and signaling overhead. Nevertheless, several critical challenges still need to be addressed. For example, due to the complicated radio propagation environments, the practical ISAC performance may degrade seriously when the communication/sensing links are blocked by obstacles such as vehicles and buildings \cite{liu2022survey}.

%---------------------- RIS-ISAC ---------------------
Recently, reconfigurable intelligent surface (RIS), also known as intelligent reflecting surface (IRS), has been recognized as a promising technology for 6G \cite{wu2023intelligent}, which can not only reconfigure the wireless channels but also provide the line-of-sight (LoS) links for both communication and sensing. Thus, it introduces additional degrees of freedom (DoFs) to design future ISAC systems, which has been widely investigated to improve the system capacity \cite{liu2022proximal}, mitigate echo interference \cite{cao2023feedback}, and enhance channel estimation performance \cite{chen2024joint}.
Besides, RIS has been employed to detect non-LoS (NLoS) multiple targets and extend the coverage of communication devices in ISAC \cite{song2022joint}.
Moreover, the authors in \cite{liu2022joint,zhu2023joint} jointly optimized RIS coefficients and transmit/receive beamformers for maximizing the sensing performance, subject to the communication requirements of the users. Similarly, the authors in \cite{hao2024joint} maximized the sum-rate of users while satisfying a minimum radar signal-to-noise ratio (SNR) constraint.
% by jointly optimizing the RIS phase shifts and the transmit beamformers. 
%In general, the main idea of most existing RIS-ISAC designs was to formulate and solve specific problems aiming at improving one performance metric (either communication or sensing) while guaranteeing the minimum performance requirements of the other, thus achieving a trade-off between communication and sensing.
Nevertheless, the above studies \cite{wu2023intelligent,liu2022proximal,cao2023feedback,chen2024joint,song2022joint,liu2022joint,zhu2023joint,hao2024joint} considered that the targets would not wiretap the transmitted information. In practice, there exists a potential security issue for ISAC, where transmitted signals may expose private information to untrusted detection targets (e.g., unauthorized unmanned aerial vehicles (UAVs)) \cite{wei2022toward}. Thus, the trade-off between maintaining the ISAC performance and reducing the risk of secrecy leakage to the eavesdropping targets should be carefully balanced.

%--------------------- secure ISAC&RIS-ISAC ----------------
To guarantee the security, many existing works have studied the physical-layer security (PLS) in ISAC systems.
Assuming the target was an eavesdropper, the authors in \cite{su2022secure} employed the interference to confuse the eavesdropper, thereby improving security performance of ISAC. In \cite{luo2022secure}, the authors investigated both the sum-rate and jamming power maximization problems. Subsequently, the imperfect channel state information (CSI) \cite{liu2022outage} and possible estimation errors \cite{su2023sensing} were further considered in the transmit beamforming design to guarantee transmission secrecy.
In addition, a few recent studies \cite{hua2023secure,salem2023active,li2024noma,zhang2024secure} have focused on secure transmission designs for RIS-ISAC systems. Specifically, the authors in \cite{hua2023secure} developed the Lagrange duality and majorization-minimization (MM) algorithms to maximize the sensing beampattern gain while guaranteeing communication and PLS requirements. In \cite{salem2023active}, the authors considered security issues for a RIS-ISAC system when being eavesdropped by a malicious UAV, and the achievable secrecy rate was maximized by jointly optimizing the radar receiving beamformer, active RIS reflection coefficients, and transmit beamformer. In \cite{li2024noma}, the authors  jointly optimized the jamming, active transmit precoding, and passive phase reflecting coefficients for maximizing the sum secrecy rate. The authors in \cite{zhang2024secure} proposed an alternating optimization (AO) algorithm for securing transmission against a potential eavesdropper. 
However, the above downlink RIS-ISAC studies mainly focused on the transceiver/reflection design at the BS/RIS by employing fixed-position antennas (FPAs), while the channel spatial variation was not exploited therein, leading to performance limitations.

%only reconfigured the wireless channels at the transmitter (Tx) to improve the transmission security, while the channel spatial variation at the receiver (Rx) was not exploited therein, leading to performance limitations.

%--------------------------- MA -------------------------
To overcome the performance limitations of conventional FPAs, the concept of the movable antenna (MA) 
%also known as fluid antenna system, 
has been introduced to wireless systems \cite{zhu2024historical}, allowing the positions of antenna elements at the transceiver to be flexibly adjusted to enhance communication performance \cite{zhu2023modeling, ma2023mimo}.
% (also known as fluid antenna system (FAS) \cite{zhu2024historical}).
Specifically, flexible cables can connect MAs to the radio frequency (RF) chains \cite{zhu2023movable}, enabling the reconstruction of wireless channels by leveraging new DoFs in the spatial domain.
The field-response-based channel model for MAs-assisted wireless communication systems was developed in \cite{zhu2023modeling}, and the authors in \cite{ma2023mimo} studied the capacity maximization problem for MAs-enabled multiple-input multiple-output (MIMO) systems. Both works have demonstrated the significant performance gain provided by antenna movement in terms of spatial diversity and multiplexing. Based on this channel model, many existing works have highlighted the benefits of MAs-aided systems compared to FPA systems in improving signal-to-interference-plus-noise ratio (SINR) of received signals \cite{gao2024joint,xiao2023multiuser,mei2024movable,wei2024joint}, mitigating interference \cite{zhu2023movableM}, reducing transmit powers \cite{zhu2023movable,wu2023movable}, and enhancing wireless sensing performance \cite{ma2024movable}. 
In addition, several recent works have integrated MAs into ISAC systems to enhance the beampattern gain \cite{wu2024movable} or the communication rate under the constraint of sensing requirements \cite{kuang2024movable,lyu2024flexible}, while security performance was not considered therein.
In terms of PLS,
%enhancing system security performance \cite{cheng2024enabling,tang2024secure,hu2024movable,hu2024secure}. 
the authors in \cite{cheng2024enabling, tang2024secure} jointly optimized the transmit beamformer and the positions of MAs to maximize the system secrecy rate in the presence of an eavesdropper. 
%Besides, the authors in \cite{hu2024movable} proposed an alternating projected gradient ascent algorithm to minimize the secrecy outage probability by jointly optimizing the transmit beamforming and positions of MAs with multiple eavesdroppers. In \cite{hu2024secure}, the authors developed an AO algorithm to jointly optimize the transmit beamforming and the positions of MAs for maximizing the achievable secrecy rate in the presence of multiple single-antenna eavesdroppers.
Besides, considering multiple single-antenna eavesdroppers, the authors in \cite{hu2024movable,hu2024secure} proposed the alternating projected gradient ascent algorithm to 
jointly optimize the positions of MAs and the transmit beamformer for enhancing system PLS.
Although works \cite{cheng2024enabling,tang2024secure,hu2024secure,hu2024movable} investigated secure transmission designs for various communication systems, the role of MAs in securing transmission for RIS-ISAC systems was not unveiled, and the transceiver design in the literature on MA/RIS-aided communication systems cannot be directly applied.
%Meanwhile,  research on securing RIS-ISAC is still in its infancy.%Moreover, the authors in [x] made an important initial attempt to characterize the fundamental performance of MA-aided wireless sensing.

%---------------- Motivation & Contribution -----------------
Motivated by the above discussions, to fully exploit the DoFs in channel reconfiguration provided by the MAs and RIS, this paper investigates MAs-assisted secure transmission for RIS-ISAC systems, where the base station (BS) transmits private information to multiple single MA-enabled users under the threat of an eavesdropping target.
%In addition, the RIS is leveraged to not only create a virtual LoS link for target sensing but also to assist downlink communications from the BS to multiple users.
%To the best of our knowledge, this is the first attempt to explore secure transmission issues for MAs-assisted RIS-ISAC systems. 
The main contributions of this paper are summarized as follows:
\begin{itemize}
	\item We consider an MAs-aided RIS-ISAC system in the presence of an eavesdropping target, aiming at enhancing the PLS. Then, we formulate an optimization problem to maximize the sum-rate of all users by jointly optimizing the transmit/receive beamformers, the RIS phase shifts, and the positions of MAs at communication users, while ensuring the minimum communication rate requirements of users, the minimum sensing SNR constraint of the target, and the maximum secrecy leakage constraint.
%	 to the eavesdropping target.
	\item A two-layer iterative algorithm based on the penalty method is proposed to tackle the formulated non-convex optimization problem. In the inner-layer iterations, for a given penalty factor, we optimize the variables within different blocks in an alternating manner. Specifically, by using Lagrange duality, we first obtain the auxiliary variables with a semi-closed-form solution. Then, the receive beamformer filter at the BS is optimally obtained in a closed form by solving a Rayleigh quotient subproblem. Subsequently, the transmit beamformer matrix is obtained by solving a convex subproblem. Finally, the RIS reflection coefficients and the positions of MAs are obtained by utilizing the MM algorithm. In the outer-layer iterations, the penalty parameter is updated for achieving a trade-off between the optimality and feasibility of the solution.
	\item Extensive simulation results are shown to validate the considerable benefits of the MAs-assisted schemes in terms of achieving higher sum-rates compared to other conventional FPA systems. This also indicates that the MA-aided schemes can achieve enhanced security performance under a given secrecy constraint.
\end{itemize}

The rest of the paper is organized as follows. In Section II, the system model and problem formulation are presented. In Section III, we propose the two-layer penalty-based solution.  The simulations are presented in Section VI to demonstrate the performance of the proposed solution and the conclusions are shown in Section VII.

\textit{Notation}: The bold-face lower-case and upper-case letters denote the vectors and matrices, respectively. $|a|$ denotes the magnitude of scalar $a$. The norm of vector $\mathbf{a}$ is defined by $\|\mathbf{a}\|$. $\|\mathbf{A}\|_F$ denotes the Frobenius norm of matrix $\mathbf{A}$. $(\cdot)^{\rm T}$, $(\cdot)^{*}$, and $(\cdot)^{\rm H}$ represent the transpose, conjugate operations, and Hermitian transpose, respectively. $\mathbf{I}_N$ indicates the $N$-order identity matrix. In addition, we denote a complex circularly-symmetric Gaussian distribution with mean zero and covariance $b$ as $\mathcal{CN}\left(0,b\right)$. ${\rm diag}\left\{\mathbf{a}\right\}$ represents a diagonal matrix with the elements of vector $\mathbf{a}$ on the main diagonal. $\otimes$ denotes the Kronecker product operator.  The gradient vector of function $f$ with respect to (w.r.t.) vector $\mathbf{x}$ is denoted by $\nabla_{\mathbf{x}}f(\mathbf{x})$, and $\nabla^2_{\mathbf{x}}f(\mathbf{x})$ denotes the Hessian matrix of function $f$ w.r.t. vector $\mathbf{x}$. The real and imaginary components of a complex number are represented by $\Re\{\cdot\}$ and $\Im\{\cdot\}$, respectively. The phase of complex number $a$ is denoted by $\angle{a}$. The symbol $\mathbb{E}\{\cdot\}$ represents the statistical expectation.

\section{System Model and Problem Formulation}
\begin{figure}[t]
	\begin{center}
		\includegraphics[width= 3 in]{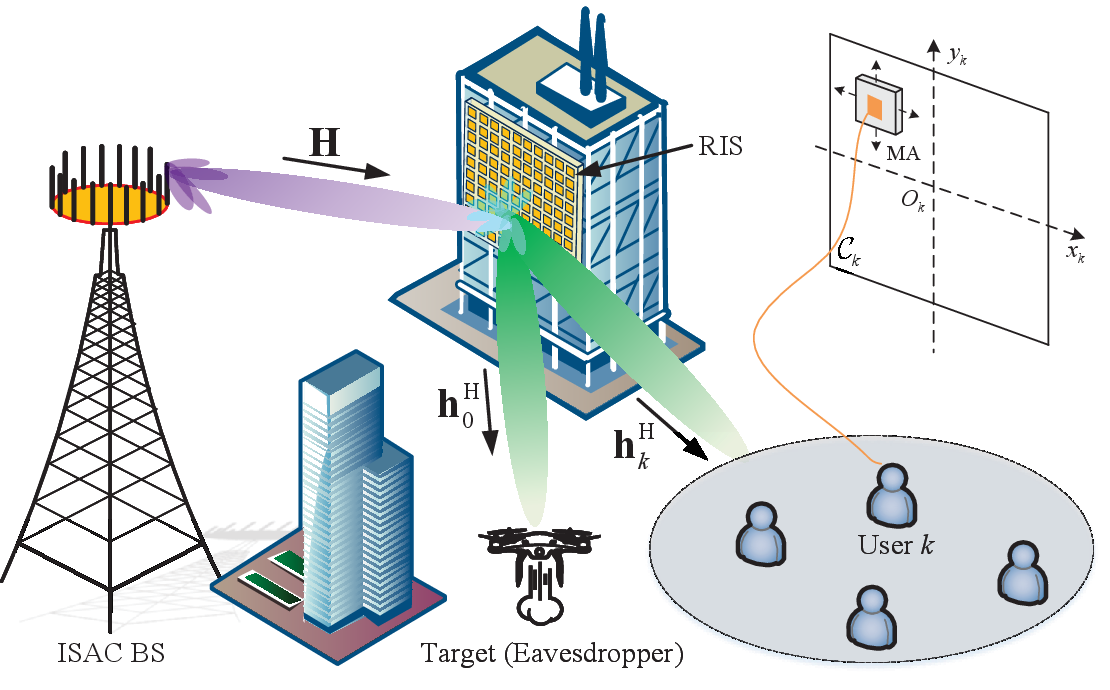}
		\caption{Illustration of the RIS-ISAC system with $K$ single-MA users.} \label{fig::systemModel}
	\end{center}
	\vspace{-1.5em}
\end{figure}

In this paper, we consider a RIS-ISAC system as illustrated in Fig. \ref{fig::systemModel}, where a dual-function BS transmits information to $K$ single-MA users, while an eavesdropping target is hovering nearby to intercept their data\footnote{Although we consider a single eavesdropping target in this paper, both the problem formulation and proposed algorithm can be readily extended to multiple eavesdropping targets with slight modifications.}. 
%The ISAC BS enables its radar sensing functionality for detecting the eavesdropping target and hence, the transmit/receive beamformers, RIS capabilities and MA positions are jointly utilized to hinder the eavesdropping operation. 
Specifically, the uniform planar array (UPA) BS is equipped with $M_t$ transmit FPAs and $M_r$ receive FPAs, where we assume $M_t = M_r = M$ for simplicity.
%a uniform planar array (UPA) of size $M = M_1 \times M_2$ transmit antennas, where $M_1$ and $M_2$ denote the number of antennas along horizontal and vertical directions, respectively. 
It is assumed that the direct links from the BS to the users/target are not available due to blockages, thus the RIS is responsible for creating strong virtual LoS links to 
ensure communication and sensing performance.
The RIS is formed by stacking $N = N_1 \times N_2$ ($N_1$ horizontal elements and $N_2$ vertical elements) passive reflecting elements. To improve the channel conditions for user $k$, the MA can be moved within a local rectangular area, denoted as $\mathcal{C}_k = [x_k^{\min},y_k^{\min}] \times [x_k^{\max},y_k^{\max}]$. Moreover, for each user $k$, a local coordinate system, i.e., $\mathbf{u}_k = [x_k ,y_k]^{\mathrm{T}} \in \mathcal{C}_k$ with $1 \leq k \leq K$, is established to specify the MA position. It is noted that the local coordinate systems for various users are independently  defined, with the origin of the $k$-th user being $O_k$. Additionally, the position of the FPA at the eavesdropper is represented by $\mathbf{u}_0 = [x_0, y_0]^{\rm T}$, the local coordinate of the $m$-th FPA at the ISAC BS is represented by $\mathbf{v}_m = [X_m, Y_m]^{\rm T}, 1 \leq m \leq M$, and the position of the $n$-th RIS reflection element is represented by $\mathbf{t}_n = [\tilde{X}_n, \tilde{Y}_n]^{\rm T}, 1 \leq n \leq N$.
%\footnote{According to \cite{bjornson2022reconfigurable}, an RIS can be seen as a synchronized multi-antenna BS with a phased array, where each reconfigurable element of the RIS can be considered analogous to an antenna. Therefore, we define the position of each element to be similar to the FPA at the BS.}.

\subsection{Transmit Signal Model} % delete or not?
%The transmitted signal needs to be tailored against the eavesdropping while fulfilling the communication and sensing demands. We assume that the ISAC BS simultaneously transmits information signals and radar signals in time slot $t$, which is given by
In the considered systems, both communication and radar signals are assumed to be transmitted simultaneously at the BS. As such, the transmit signal vector is given by
\begin{equation} \label{eq::transmitted_signal} \small
		\mathbf{x} = \mathbf{W}_{c}\mathbf{s}_c + \mathbf{W}_{r}\mathbf{s}_r = \mathbf{W}\mathbf{s},
\end{equation}	
where $\mathbf{s}_c = \left[s_{c,1}, \cdots, s_{c,K}\right]^{{\rm T}}$  represents the communication symbols intended for the $K$ users, 
% with $\mathbb{E}\left\{\mathbf{s}_c\mathbf{s}_c^{\rm H}\right\} = \mathbf{I}_K$
and $\mathbf{s}_r = \left[s_{r,1}, \cdots, s_{r,M}\right]^{{\rm T}}$ denotes $M$ individual radar waveforms.
%with $\mathbb{E}\left\{\mathbf{s}_r\mathbf{s}_r^{\rm H}\right\} = \mathbf{I}_M$. 
%It is assumed that communication and radar signals are statistically independent and uncorrelated, i.e., $\mathbb{E}\{\mathbf{s}_c\mathbf{s}^{\rm H}_r\} = \mathbf{0}$. 
$\mathbf{W}_{c} = [\mathbf{w}_{c,1}, \cdots, \mathbf{w}_{c,k}, \cdots, \mathbf{w}_{c,K}] \in \mathbb{C}^{M \times K}$ and $\mathbf{W}_{r} = [\mathbf{w}_{r,1}, \cdots, \mathbf{w}_{r,m}, \cdots, \mathbf{w}_{r,M}] \in \mathbb{C}^{M \times M}$ represent the transmit beamformers for communication and sensing, respectively. Besides, the combined beamforming matrix and signals are defined as $\mathbf{s} \triangleq [\mathbf{s}^{\rm T}_c,\mathbf{s}^{\rm T}_r]^{\rm T} \in \mathbb{C}^{(K + M) \times 1}$ and $\mathbf{W} \triangleq [\mathbf{W}_{c}, \mathbf{W}_{r}]\in \mathbb{C}^{M \times (K + M)}$. 

\subsection{Channel Model}
% 准静态信道+远场假设 (暂时删除)
%We consider narrow-band channels with slow fading and focus on one quasi-static fading block, during which all the channels involved are assumed to remain unchanged. 
Since the signal propagation distance is much larger than the size of the transmit/receive region, the far-field condition is satisfied \cite{zhu2023modeling}. 
Thus, the angle-of-arrival (AoA), angle-of-departure (AoD), and amplitude of the complex coefficient for each channel path received by users remain unchanged despite the movement of the MAs, which means that only the phases of multiple channel paths vary within the receive region \cite{zhu2023movable}. 

%We employ the field-response based channel model as in \cite{zhu2023modeling}, where the channel response is the superposition of the coefficients of multiple channel paths between the transceivers. We first describe the wiretap channel between the RIS and eavesdropper, and the legitimate channel between the RIS and $k$-th user, which are denoted as $\mathbf{h}_0$ and $\mathbf{h}_k$, $k \in \{1,2,\cdots,K\}$, respectively.
According to the field-response channel model \cite{zhu2023modeling,ma2023mimo}, we construct all the channels involved in Fig.~\ref{fig::systemModel}. First, we describe the channel vector between the RIS and eavesdropper, denoted by $\mathbf{h}_0$, and the channel vector between the RIS and $k$-th user, denoted by $\mathbf{h}_k$, $k \in \{1,2,\cdots,K\}$.
Denote the number of transmit and receive paths between the RIS and node $\kappa$ as $L^t_{\kappa}$ and $L^r_{\kappa}$, $\kappa \in  \{0,1,2,\cdots,K\}$, respectively. Note that $\kappa = 0$ represents the eavesdropper, and $\kappa = k$ for $k \in \{1,2,\cdots,K\}$ represents user $k$. Denote $\vartheta^r_{\kappa,i}$ and $\varphi^r_{\kappa,i}$ as the elevation and azimuth AoAs of the $i$-th receive path between the RIS and node $\kappa$, respectively. Subsequently, for node $\kappa$, the signal propagation difference $\rho^r_{\kappa,i}(\mathbf{u}_{\kappa})$ of the $i$-th receive path between the position of the FPA/MA $\mathbf{u}_{\kappa}$ and reference point $O_{\kappa} = [0,0]^{\rm T}$ can be represented as $\rho^r_{\kappa,i}(\mathbf{u}_{\kappa}) = x_{\kappa}\sin\vartheta^r_{\kappa,i}\cos\varphi^r_{\kappa,i} + y_{\kappa}\cos\vartheta^r_{\kappa,i}$ with $1 \leq i \leq L^r_{\kappa}$, and the corresponding phase difference is given by $\frac{2\pi}{\lambda}\rho^r_{\kappa,i}(\mathbf{u}_{\kappa})$, where $\lambda$ is the wavelength. Considering these phase differences across all $L^r_{\kappa}$ receive paths, the channel receive field-response vector (FRV) between the RIS and node $\kappa$ is represented as
\begin{equation} \label{eq::receive_FRV} \small
	\mathbf{f}_{\kappa}(\mathbf{u}_{\kappa}) \!=\! [e^{j\frac{2\pi}{\lambda}\rho^r_{\kappa,1}(\mathbf{u}_{\kappa})},  \!\cdots\!, e^{j\frac{2\pi}{\lambda}\rho^r_{\kappa,L^r_{\kappa}}(\mathbf{u}_{\kappa})}]^{\rm T}, \kappa \in \{0,1,\!\cdots\!,K\}.
\end{equation} 
Similarly, considering the phase differences across all $L^t_{\kappa}$ transmit paths between the $n$-th RIS reflection element and node $\kappa$, the transmit FRV, denoted by $\mathbf{g}_{\kappa}(\mathbf{t}_n)$, is expressed as
\begin{equation} \label{eq::transmit_FRV} \small
	\mathbf{g}_{\kappa}(\mathbf{t}_n) \!=\! [e^{j\frac{2\pi}{\lambda}\rho^t_{\kappa,1}(\mathbf{t}_n)}, \cdots,\! e^{j\frac{2\pi}{\lambda}\rho^t_{\kappa,L^t_{\kappa}}(\mathbf{t}_n)}]^{\rm T}, \kappa \in \{0,1,\cdots,K\},
\end{equation}
where $\rho^t_{\kappa,j}(\mathbf{t}_n) = \tilde{X}_n\sin\vartheta^t_{\kappa,j}\cos\varphi^t_{\kappa,j} + \tilde{Y}_n\cos\vartheta^t_{\kappa,j}$, $1 \leq j \leq L^t_{\kappa}$, denotes the difference in signal propagation distance of the $j$-th transmit channel path between $\mathbf{t}_n$ and the origin of the local coordinate system at the RIS. Therein, $\vartheta^t_{\kappa,j}$ and $\varphi^t_{\kappa,j}$ are the elevation and azimuth AoDs for the $j$-th transmit path between the RIS and node $\kappa$, respectively.

%Similarly, the transmit FRV from the $i$-th FPA of the BS to the $k$-th user is given by
%\begin{equation}
%	\mathbf{b}_{k,i} = [e^{j\frac{2\pi}{\lambda}\rho^t_{k,1}(\mathbf{r}_{B}_i)}, e^{j\frac{2\pi}{\lambda}\rho^t_{k,2}(\mathbf{r}_{B}_i)}, \cdots, e^{j\frac{2\pi}{\lambda}\rho^t_{k,L^t_k}(\mathbf{r}_{B}_i)}]^{\rm T},
%\end{equation}
%where $\rho^t_{k,j}(\mathbf{v}_n) = x_n\cos\theta^t_{k,j}\sin\phi^t_{k,j} + y_n\sin\theta^t_{k,j}$, $\theta^t_{k,j}$ and $\phi^t_{k,j}$ are the elevation and azimuth AoDs, respectively.

Furthermore, path-response matrix (PRM), denoted by $\mathbf{\Sigma}_{\kappa} \in \mathbb{C}^{L^r_{\kappa} \times L^t_{\kappa}}$, is defined to account for the responses between all channel paths from RIS reference point $\mathbf{t}_0 = [0, 0]^{\rm T}$ and the node $\kappa$ reference point $O_{\kappa}$. Therefore, the channel vector from the RIS to node $\kappa$ is obtained as
\begin{equation} \label{eq::channel_RIS_to_kappa} \small
	\mathbf{h}_{\kappa}(\mathbf{u}_{\kappa}) = (\mathbf{f}_{\kappa}(\mathbf{u}_{\kappa})^{\rm H}\mathbf{\Sigma}_{\kappa}\mathbf{G}_{\kappa})^{\rm T}, \kappa \in \{0,1,\cdots,K\},
\end{equation}
where $\mathbf{G}_{\kappa} = [\mathbf{g}_{\kappa}(\mathbf{t}_1),\mathbf{g}_{\kappa}(\mathbf{t}_2),\cdots,\mathbf{g}_{\kappa}(\mathbf{t}_N)]$ is the field-response matrix (FRM) at the RIS. 
Since the RIS and eavesdropper are equipped with FPAs, 
%in the following part of this paper,
$\mathbf{\Sigma}_{\kappa}$, $\mathbf{G}_{\kappa}$, and $\mathbf{f}_0(\mathbf{u}_0)$ for $\kappa \in \{0,1,\cdots,K\}$ are all constant, while $\mathbf{f}_k(\mathbf{u}_k)$ is a function changing with $\mathbf{u}_k$, $k \in \{1,\cdots,K\}$.
% 以上构建的是 MA-based 信道

% 以下是 UPA (FPA) 信道构建
Finally, we describe the channel matrix between the BS and RIS, denoted by $\mathbf{H}$. Similar to \eqref{eq::channel_RIS_to_kappa}, $\mathbf{H}$ can be expressed as
\begin{equation} \label{eq::channel_BS_to_RIS} \small
	\mathbf{H} = \mathbf{F}_s^{\rm H}\mathbf{\Sigma}_{bs}\mathbf{G}_b,
\end{equation}
where $\mathbf{F}_s = [\mathbf{f}_s(\mathbf{t}_1), \cdots, \mathbf{f}_s(\mathbf{t}_N)]$  denotes the receive FRM at the RIS, and $\mathbf{f}_s(\mathbf{t}_n) = [e^{j\frac{2\pi}{\lambda}\rho^r_{s,1}(\mathbf{t}_n)}, \cdots, e^{j\frac{2\pi}{\lambda}\rho^r_{s,L^r_{bs}}(\mathbf{t}_n)}]^{\rm T}$ with $1 \leq n \leq N$, is the receive FRV associated with the $n$-th element at the RIS. Besides, $\mathbf{G}_b = [\mathbf{g}_b(\mathbf{v}_1), \cdots, \mathbf{g}_b(\mathbf{v}_M)]$ denotes the transmit FRM at the BS, where $\mathbf{g}_b(\mathbf{v}_m) = [e^{j\frac{2\pi}{\lambda}\rho^t_{s,1}(\mathbf{v}_m)}, \cdots, e^{j\frac{2\pi}{\lambda}\rho^t_{s,L^t_{bs}}(\mathbf{v}_m)}]^{\rm T}$, $1 \leq m \leq M$, is the transmit FRV of the $m$-th element at the BS. Therein, denote $L^t_{bs}$ and $L^r_{bs}$ as the number of transmit and receive paths from the BS to RIS, respectively. In addition, $\vartheta^t_{s,j}$ and $\varphi^t_{s,j}$ as well as $\vartheta^r_{s,i}$ and $\varphi^r_{s,i}$ represent the elevation and azimuth AoDs for $j$-th transmit path as well as the elevation and azimuth AoAs for $i$-th receive path between the BS and RIS, respectively. As such, we have $\rho^t_{s,j}(\mathbf{v}_m) = X_m\sin\vartheta^t_{s,j}\cos\varphi^t_{s,j} + Y_m \cos\vartheta^t_{s,j}, 1 \leq j \leq L^t_{bs}$, and $\rho^r_{s,i}(\mathbf{t}_n) = \tilde{X}_n\sin\vartheta^r_{s,i}\cos\varphi^r_{s,i} + \tilde{Y}_n \cos\vartheta^r_{s,i}, 1 \leq i \leq L^r_{bs}$, respectively. Furthermore, $\mathbf{\Sigma}_{bs} \in \mathbb{C}^{L^r_{bs} \times L^t_{bs}}$ represents the PRM between all transmit and receive paths.

\subsection{Communication, Radar Sensing and Security Model}
%% Communication Model
Based on the above signal and channel models, the received signal at user $k$ can be obtained as
\begin{equation} \label{eq::received_signal} 
	y_k = \mathbf{g}_k^{\rm H}(\mathbf{\Phi},\mathbf{u}_k)\mathbf{x} + n_k, 
\end{equation}
where $\mathbf{g}_k^{\rm H}(\mathbf{\Phi},\mathbf{u}_k) = \mathbf{h}^{\rm H}_k(\mathbf{u}_k)\mathbf{\Phi}\mathbf{H} \in \mathbb{C}^{1 \times M}$ denotes the cascaded channel from the BS to user $k$ via the RIS, $k \in \{1,2,\cdots, K\}$, and
%$\mathbf{h}^{\rm H}_k(\mathbf{\Phi},\mathbf{u}_k) = \mathbf{h}^{\rm H}_{B,k}(\mathbf{u}_k) + \mathbf{h}^{\rm H}_{R,k}(\mathbf{u}_k)\mathbf{\Phi}\mathbf{H}_{BR} \in \mathbb{C}^{1 \times M}$ is the cascaded communication channel from the ISAC BS to user $k$. 
$\mathbf{\Phi} = \text{diag}\left\{\boldsymbol{\phi}\right\} \in \mathbb{C}^{N \times N}$ denotes the reflection-coefficient matrix of the RIS, with $\boldsymbol{\phi} \triangleq [\phi_1, \cdots, \phi_N]$, where $\phi_n$ denotes the reflection coefficient of the $n$-th element satisfying $|\phi_n| = 1$, $1 \leq n \leq N$.  
Besides, $n_k \sim \mathcal{CN}(0,\sigma^2_k)$ represents the additive white Gaussian noise (AWGN) at user $k$. 
It is assumed that the CSI of all relevant channels, i.e., $\mathbf{H}$ and $\mathbf{h}^{\rm H}_k(\mathbf{u}_k)$, is known at the BS through the two-timescale channel estimation approach \cite{hua2023secure}.
As a result, the received SINR for user $k$ is expressed as
% 替换为将w_c和w_r统一化
%\begin{equation} \small \label{eq::metric_communication}
%	\gamma_k \!=\! \frac{\left|\mathbf{h}^{\rm H}_k(\mathbf{u}_k)\mathbf{\Phi}\mathbf{H}\mathbf{w}_{c,k}\right|^2}{\sum\limits_{j=1,j \neq k}^{K}\!\left|\mathbf{h}^{\rm H}_k(\mathbf{u}_k)\mathbf{\Phi}\mathbf{H}\mathbf{w}_{c,j}\right|^2 \!\!+\!\! \sum\limits_{m=1}^{M}\!\left|\mathbf{h}^{\rm H}_k(\mathbf{u}_k)\mathbf{\Phi}\mathbf{H}\mathbf{w}_{r,m}\right|^2 \!\!+\! \sigma_k^2}.
%\end{equation}
\begin{equation} \label{eq::metric_communication} 
	\gamma_k = \frac{\left|\mathbf{g}_k^{\rm H}(\mathbf{\Phi},\mathbf{u}_k)\mathbf{w}_k\right|^2}{\sum\limits_{j=1,j \neq k}^{K+M}\left|\mathbf{g}_k^{\rm H}(\mathbf{\Phi},\mathbf{u}_k)\mathbf{w}_j\right|^2 + \sigma_k^2},
\end{equation}
where $\mathbf{w}_j$ denotes the $j$-th column in $\mathbf{W}$. Thus, $\mathbf{W}$ can be represented as $\mathbf{W} = [\mathbf{w}_1, \cdots, \mathbf{w}_j, \cdots, \mathbf{w}_{K+M}]$.

%% Radar Sensing Model
In addition to transmitting communication signals from the BS to user, $\mathbf{x}$ is also employed to detect the potential eavesdropping target.
To establish an effective BS-RIS-target link for sensing, the RIS is utilized to generate a virtual LoS link, thereby it is assumed that the BS has precise estimates of the eavesdropper's channel \cite{liu2022joint}.
%The RIS is utilized to create a virtual LoS link between the BS and the target, and thus forming an effective BS-RIS-target link for sensing. As such, it is assumed that  through efficient sensing.
% 与系统模型第一段重复
%As shown in Fig. \ref{fig::systemModel}, we consider the scenario where the direct paths between the BS and the potential target location is not available, because eavesdroppers may deliberately hide by the trees and other structures to avoid being discovered. Therefore, the RIS is responsible for creating strong virtual line-of-sight (LoS) links to maintain the communication and sensing performances. 
The echo signal that experiences the BS-RIS-target-RIS-BS link and is collected at the BS for target detection can be represented by
\begin{equation} \label{eq::echo_signal} 
	\begin{aligned}
		y_r = \alpha_t\mathbf{H}^{\rm H}\mathbf{\Phi}^{\rm H}\mathbf{A}_{RE}\mathbf{\Phi}\mathbf{H} \mathbf{x} + \mathbf{n}_r,
	\end{aligned}
\end{equation}
% 放到后面说
%where $\mathbf{r}_{B}^{\rm H}$ denotes the unit-norm receive beamformer at the BS satisfying $\left\|\mathbf{r}_{B}^{\rm H}\right\|_2 = 1$.
where $\alpha_t$ represents the radar cross section (RCS) following $\alpha_t \sim \mathcal{CN}(0,\sigma_t^2)$. Besides, $\mathbf{A}_{RE} = \mathbf{h}_0^{\rm H}\mathbf{h}_0$ and $\mathbf{n}_r \sim \mathcal{CN}(0, \sigma_r^2\mathbf{I}_M)$ denote the target response matrix at the RIS and
% 直接用h_0 取代原始定义
%In addition, $\mathbf{A}_{RE} = \beta_S \mathbf{a}(\theta_S, \phi_S)\mathbf{a}^{\rm H}(\theta_S, \phi_S)$ is the target response matrix at the RIS, where $\beta_S$ represents the complex gain of the RIS-target-RIS path, $\theta_S$ and $\phi_S$ are the elevation and azimuth AoDs at the RIS toward the sensing direction, in which the target may possibly exist. $\mathbf{a}(\theta_S, \phi_S)$ is the steering vector, which is given by
%\begin{equation} \small \label{eq::steering_vector}
%	\begin{aligned}
%			\mathbf{a}(\theta_S, \phi_S) &\!=\! \left[1, e^{j\frac{2\pi d}{\lambda}\sin\theta_S\cos\phi_S},\cdots,e^{j\frac{2\pi d}{\lambda}(N_1-1)\sin\theta_S\cos\phi_S}\right]^{\rm T} \otimes \\
%			&\!\!\!\!\! \left[1, e^{j\frac{2\pi d}{\lambda}\cos\theta_S},\cdots,e^{j\frac{2\pi d}{\lambda}(N_2-1)\cos\theta_S}\right]^{\rm T}, 
%		\end{aligned}
%\end{equation}
%where $d$ denotes the antenna/element spacing at the RIS. 
the noise received at the BS, respectively. Since radar sensing involves the analysis of received echo signals across $L$ samples, these samples are combined in a matrix form as
\begin{equation} \label{eq::echo_signals_L} 
	\mathbf{Y}_r = \alpha_t\mathbf{H}_t(\mathbf{\Phi})\mathbf{WS} + \mathbf{N}_r,
\end{equation}
with $\mathbf{H}_t(\mathbf{\Phi}) \triangleq \mathbf{H}^{\rm H}\mathbf{\Phi}^{\rm H}\mathbf{A}_{RE}\mathbf{\Phi}\mathbf{H}$, where $\mathbf{S} \triangleq [\mathbf{s}(1), \mathbf{s}(2), \cdots, \mathbf{s}(L)]$ and $\mathbf{N}_r \triangleq [\mathbf{n}_r(1), \mathbf{n}_r(2), \cdots, \mathbf{n}_r(L)]$ are defined as the symbol and noise matrices, respectively. In order to improve the target detection performance, the received echo signals are processed by a matched filter, which is expressed as
\begin{equation} \label{eq::echo_signals_L_filter} 
	\tilde{\mathbf{Y}}_r = \alpha_t\mathbf{H}_t(\mathbf{\Phi})\mathbf{WS}\mathbf{S}^{\rm H} + \mathbf{N}_r\mathbf{S}^{\rm H}.
\end{equation} 
Then, by defining $\tilde{\mathbf{y}}_r \triangleq {\rm vec}(\tilde{\mathbf{Y}}_r) \in \mathbb{C}^{M(K+M)\times 1}$, $\tilde{\mathbf{w}} \triangleq {\rm vec}(\mathbf{W}) \in \mathbb{C}^{M(K+M)\times 1}$, and $\tilde{\mathbf{n}}_r \triangleq {\rm vec}(\mathbf{N}_r\mathbf{S}^{\rm H}) \in \mathbb{C}^{M(K+M)\times 1}$, $\tilde{\mathbf{Y}}_r$ is vectorized as 
\begin{equation} \label{eq::echo_signals_L_filter_vectorized}
	\tilde{\mathbf{y}}_r = \alpha_t\left(\mathbf{S}\mathbf{S}^{\rm H} \otimes \mathbf{H}_t(\mathbf{\Phi})\right)\tilde{\mathbf{w}} + \tilde{\mathbf{n}}_r,
\end{equation}
which is processed by the receive beamformers, denoted by $\mathbf{r}_B \in \mathbb{C}^{M(K+M)\times 1}$, yielding
\begin{equation} \label{eq::echo_signals_L_filter_vectorized_receivedBeamformer}
	\mathbf{r}_B^{\rm H}\tilde{\mathbf{y}}_r = \alpha_t\mathbf{r}_B^{\rm H}\left(\mathbf{S}\mathbf{S}^{\rm H} \otimes \mathbf{H}_t(\mathbf{\Phi})\right)\tilde{\mathbf{w}} + \mathbf{r}_B^{\rm H}\tilde{\mathbf{n}}_r.
\end{equation}
% 雷达感知应当基于L个抽样， 因此修改为上述段落
As such, the radar output SNR is given by \cite{liu2022joint}
\begin{equation} \label{eq::metric_radar_SNR} 
	S_r = \frac{\sigma_t^2\mathbb{E}_{\mathbf{S}}\left\{\left|\mathbf{r}_B^{\rm H}\left(\mathbf{S}\mathbf{S}^{\rm H} \otimes \mathbf{H}_t(\mathbf{\Phi})\right)\tilde{\mathbf{w}}\right|^2\right\}}{L\sigma_r^2\mathbf{r}_{B}^{\rm H}\mathbf{r}_{B}}.
\end{equation}
According to \cite{liu2022joint}, $S_r$ is closely related to several radar performance metrics, including parameter estimation accuracy and detection probability, thereby it serves as a crucial performance indicator for target sensing.

%% Security Model
%We consider the scenario where the direct paths between the BS and the potential target location is not available due to the blockages, so that the RIS is responsible for creating strong virtual line-of-sight (LoS) links to maintain the communication and sensing performances. 

As the target acts as a potential eavesdropper, it attempts to intercept signals of legitimate links. As such, the received signals at the target can be expressed as
\begin{equation} \label{eq::eavesdropping_signal}
	y_e = \mathbf{g}_0^{\rm H}(\mathbf{\Phi})\mathbf{x} + n_e,
\end{equation}
where $\mathbf{g}_0^{\rm H}(\mathbf{\Phi}) = \mathbf{h}^{\rm H}_0\mathbf{\Phi}\mathbf{H} \in \mathbb{C}^{1 \times M}$ denotes the cascaded channel between the BS and the target, and $n_e \sim \mathcal{CN}(0,\sigma^2_e)$ represents the AWGN at the receiver. 
% 见前述，统一构建4条基本信道链路
%Let $\vartheta$ and $\varphi$ denote the azimuth and elevation AoDs from the RIS to the target, respectively. Accordingly, the steering vector from the RIS to the target towards direction $(\vartheta, \varphi)$ can be expressed as
%\begin{equation}\small \label{eq::steering_vector}
%	\begin{aligned}
%		\mathbf{a}(\vartheta, \varphi) &= \left[1, e^{-j\frac{2\pi d}{\lambda}\sin\vartheta\cos\varphi},\cdots,e^{-j\frac{2\pi (N_x-1)d}{\lambda}\sin\vartheta\cos\varphi}\right]\\&\otimes\left[1, e^{-j\frac{2\pi d}{\lambda}\sin\vartheta\sin\varphi},\cdots,e^{-j\frac{2\pi (N_y-1)d}{\lambda}\sin\vartheta\sin\varphi}\right],
%	\end{aligned}
%\end{equation}
%where $d$ denotes the spacing between two adjacent reflecting elements. Thus, we have $\mathbf{h}_{RE}^{\rm H} = \alpha_{RE} \mathbf{a}(\vartheta, \varphi)$, where $\alpha_{RE}$ represents the large-scale fading coefficient. 
Thus, the eavesdropping SINR for the target to intercept the information intended for user $k$ can be obtained as
% 将w_c和w_r统一化
%\begin{equation} \small \label{eq::metric_eavesdropping}
%	\gamma_{e,k} = \frac{|\mathbf{g}^{\rm H}(\mathbf{\Phi})\mathbf{w}_{c,k}|^2}{\sum\limits_{j=1,j \neq k}^{K}|\mathbf{g}^{\rm H}(\mathbf{\Phi})\mathbf{w}_{c,j}|^2 + \sum\limits_{m=1}^{M}|\mathbf{g}^{\rm H}(\mathbf{\Phi})\mathbf{w}_{r,m}|^2 + \sigma_e^2}.
%\end{equation}
\begin{equation} \label{eq::metric_eavesdropping} 
	\gamma_{e,k} = \frac{|\mathbf{g}_0^{\rm H}(\mathbf{\Phi})\mathbf{w}_k|^2}{\sum\limits_{j=1,j \neq k}^{K+M}|\mathbf{g}_0^{\rm H}(\mathbf{\Phi})\mathbf{w}_j|^2 + \sigma_e^2}.
\end{equation}

\subsection{Problem Formulation}
In the considered MAs-aided RIS-ISAC system, each user may have heterogeneous secrecy requirements. Thus, considering secrecy leakage SINR constraints leads to a more flexible resource allocation (e.g. transmit power and beamformer) compared to imposing constraints on the secrecy rate or directly optimizing it. In this paper, to maximize the sum-rate for users, we jointly optimize the transmit beamformer $\mathbf{W}$, RIS reflection coefficient matrix $\mathbf{\Phi}$, receive filter $\mathbf{r}_{B}$, and positions of MAs $\{\mathbf{u}_k\}_{k=1}^{K}$, while satisfying the minimum radar output SNR, the minimum required communication SINR, and the maximum allowable secrecy leakage. The secure transmission optimization problem for RIS-ISAC systems is thus formulated as 

\vspace{-1 em}
\begin{subequations}  \label{eq::originalFormulation}
	\begin{align}
		 \max_{\mathbf{W},\mathbf{\Phi},\mathbf{r}_{B},\{\mathbf{u}_k\}} & \sum_{k=1}^{K} \log_2(1+\gamma_k) \label{eq::originalFormulation_objective}\\
		 {\rm s.t.}~ & S_r \geq \Gamma_r, \label{eq::originalFormulation_b}\\
		 &\gamma_k \geq \Gamma_k, 1 \leq k \leq K, \label{eq::originalFormulation_c}\\
		 & \gamma_{e,k} \leq \Gamma_{e,k}, 1 \leq k \leq K, \label{eq::originalFormulation_d}\\
		 & \left\|\mathbf{W}\right\|_F \leq P_B, \label{eq::originalFormulation_e}\\
%		 & \left\|\mathbf{r}_{B}^{\rm H}\right\|_2 = 1,
%		 \label{eq::originalFormulation_f}\\
		 & \left|\phi_n\right| = 1, 1 \leq n \leq N, \label{eq::originalFormulation_g} \\
		 & \mathbf{u}_k \in \mathcal{C}_k, 1 \leq k \leq K, \label{eq::originalFormulation_h}
	\end{align}
\end{subequations}
where the radar SNR requirement is specified in constraint \eqref{eq::originalFormulation_b}, with $\Gamma_r$ representing the minimum SNR required for the echo signal at the BS; constraint \eqref{eq::originalFormulation_c} ensures that the SINR at user $k$ is no less than the predefined threshold $\Gamma_k$; constraint \eqref{eq::originalFormulation_d} limits the maximum secrecy leakage for $k$-th user at the eavesdropping target to below $\Gamma_{e,k}$; $P_B$ in constraint \eqref{eq::originalFormulation_e} denotes the transmit power budget at the BS; constraint \eqref{eq::originalFormulation_g} imposes a unit-modulus constraint on the RIS elements reflection coefficients; and constraint \eqref{eq::originalFormulation_h} indicates that each MA can only move within the given receive region, i.e., $\mathcal{C}_k$. It is noted that with constraints \eqref{eq::originalFormulation_c} and \eqref{eq::originalFormulation_d}, the level of PLS for the considered system can be guaranteed. Specifically, the secrecy rate for user $k$, denoted by $R_k$, is lower-bounded by
%$\max\{\log_2(1+\Gamma_k)-\log_2(1+\Gamma_{e,k}),0\}$.
\begin{equation} \small \label{eq::PLS_level}
	R_k \geq \max\{\log_2(1+\Gamma_k)-\log_2(1+\Gamma_{e,k}),0\}.
\end{equation}
Due to the complicated objective \eqref{eq::originalFormulation_objective} in fractional form, problem \eqref{eq::originalFormulation} is highly non-convex. Besides, the coupling between  $\mathbf{W},\mathbf{\Phi},\mathbf{r}_{B}$, and $\{\mathbf{u}_k\}$ in both the objective function and constraints makes \eqref{eq::originalFormulation} more intractable.

\section{Proposed Solution}
In this section, we first transform \eqref{eq::originalFormulation_objective} into a  tractable expression by implementing the fractional programming (FP) technique. Then, a two-layer penalty-based algorithm is proposed to decouple optimization variables.

%in different blocks, which includes two-layer iterations, i.e., the inner layer and outer layer. The inner layer solves the penalized optimization problem in an alternate manner. Specifically, five subproblems are solved in the sequel, which respectively optimize the auxiliary variable set $\Omega$, receive filter $\mathbf{r}_{B}$, transmit beamformer $\mathbf{W}$, RIS reflection coefficients matrix $\mathbf{\Phi}$, and MA positions $\{\mathbf{u}_k\}$, with all the other variables being fixed. The outer layer updates the penalty factor over iterations to guarantee convergence.

\subsection{Problem Transformation}
%\subsubsection{FP-based Transformation for objective in \eqref{eq::originalFormulation_objective}}
By exploiting the FP technique, \eqref{eq::originalFormulation_objective} can be transformed into a polynomial expression 
%\begin{lemma}
%	By resorting to the FP approach in [x], \eqref{eq::originalFormulation_objective} can be expressed as
%	\begin{equation} \small
%		\begin{aligned}
%			&f(\mathbf{W},\mathbf{\Phi},\{\mathbf{u}_k\},\boldsymbol{\Lambda},\mathbf{c}) \triangleq \sum_{k=1}^{K}\log_2(1+\lambda_k) - \sum_{k=1}^{K}\lambda_k - \sum_{k=1}^{K}|c_k|^2\sigma_k^2 \\
%			& \!\!-\!\! \sum_{k=1}^{K}|c_k|^2\!\!\!\!\!\!\sum\limits_{j=1,j \neq k}^{K+M}\!\!\!\!\left|\mathbf{g}_k^{\rm H}(\mathbf{\Phi},\mathbf{u}_k)\mathbf{w}_j\right|^2 \!\!\!+\! 2\sum_{k=1}^{K}\!\!\sqrt{1+\lambda_k}\Re\{c_k^*\mathbf{g}_k^{\rm H}(\mathbf{\Phi},\mathbf{u}_k)\mathbf{w}_k\}.
%		\end{aligned}
%	\end{equation}
%\end{lemma}
via introducing auxiliary variables $\boldsymbol{\lambda} = [\lambda_1, \cdots, \lambda_K]$, and the Lagrangian dual expression of \eqref{eq::originalFormulation_objective} can be obtained as
\begin{equation} \label{eq::FP_1} \small
	\sum_{k=1}^{K}\!\log_2(1+\lambda_k) - \sum_{k=1}^{K}\!\lambda_k + \sum_{k=1}^{K}\!\frac{(1+\lambda_k)\!\left|\mathbf{g}_k^{\rm H}(\mathbf{\Phi},\mathbf{u}_k)\mathbf{w}_k\right|^2}{\sum\limits_{j=1,j \neq k}^{K+M}\!\left|\mathbf{g}_k^{\rm H}(\mathbf{\Phi},\mathbf{u}_k)\mathbf{w}_j\right|^2 \!+\! \sigma_k^2},
\end{equation} 
where $\mathbf{W}$, $\mathbf{\Phi}$, and $\{\mathbf{u}_k\}$ are extracted from the $\log(\cdot)$ function, yet they remain coupled in the third term. Since the third term is still in a fractional form, we introduce auxiliary variables $\boldsymbol{\iota} = [\iota_1,\cdots,\iota_K]$ and re-invoke FP to further transform \eqref{eq::FP_1} into a polynomial expression, which yields
\begin{equation} \label{eq::FP_2} \small
	\begin{aligned}
		&f(\mathbf{W},\mathbf{\Phi},\{\mathbf{u}_k\},\boldsymbol{\lambda},\boldsymbol{\iota}) \triangleq
		 \sum_{k=1}^{K}\log_2(1+\lambda_k) - \sum_{k=1}^{K}\lambda_k - \sum_{k=1}^{K}|\iota_k|^2\sigma_k^2 \\
		& \!\!-\!\! \sum_{k=1}^{K}|\iota_k|^2\!\!\!\!\!\!\sum\limits_{j=1,j \neq k}^{K+M}\!\!\!\!\left|\mathbf{g}_k^{\rm H}(\mathbf{\Phi},\mathbf{u}_k)\mathbf{w}_j\right|^2 \!\!\!+\! 2\sum_{k=1}^{K}\!\!\sqrt{1+\lambda_k}\Re\{\iota_k^*\mathbf{g}_k^{\rm H}(\mathbf{\Phi},\mathbf{u}_k)\mathbf{w}_k\}.
	\end{aligned}
\end{equation}

%Thus, we have the following optimization problem
%\begin{subequations}
%	\begin{align}
%		&\max_{\mathbf{W},\mathbf{\Phi},\mathbf{r}_{B},\{\mathbf{u}_k\},\boldsymbol{\Lambda},\mathbf{c}}  f(\mathbf{W},\mathbf{\Phi},\{\mathbf{u}_k\},\boldsymbol{\Lambda},\mathbf{c}) \\
%		&{\rm s.t.}~  \eqref{eq::originalFormulation_b}, \eqref{eq::originalFormulation_c}, \eqref{eq::originalFormulation_d}, \eqref{eq::originalFormulation_e},\eqref{eq::originalFormulation_f},\eqref{eq::originalFormulation_g}, \eqref{eq::originalFormulation_h},\eqref{eq::originalFormulation_i} 
%	\end{align}
%\end{subequations}

%\subsubsection{Problem Reformulation}
Subsequently, to separate the optimization variables into distinct blocks, a penalty-based algorithm is developed. Specifically, we introduce several auxiliary variables $ \left\{x_j,  z_{k,j},  1 \leq k \leq K, 1 \leq j \leq K+M\right\}$. Then, let $\mathbf{g}_0^{\rm H}(\mathbf{\Phi})\mathbf{w}_j = x_j$ and $\mathbf{g}^{\rm H}_k(\mathbf{\Phi},\mathbf{u}_k)\mathbf{w}_j = z_{k,j}$, respectively. Thus, problem \eqref{eq::originalFormulation} can be equivalently transformed into

\vspace{-1 em}
\begin{subequations}\label{eq::ReFormulation} \small
	\begin{align}
		&\!\!\!\!\!\!\!\!\!\!\!\!\!\!\!\!\!\max_{\mathbf{W},\mathbf{\Phi},\mathbf{r}_{B},\{\mathbf{u}_k\},\Omega} \quad \sum_{k=1}^{K}\log_2(1+\lambda_k) - \sum_{k=1}^{K}\lambda_k - \sum_{k=1}^{K}|\iota_k|^2\sigma_k^2 \nonumber \\ 
		&\!\!\!\!\!\!\!\!\!\!-\sum_{k=1}^{K}|\iota_k|^2\!\!\!\!\sum\limits_{j=1,j \neq k}^{K+M}\left|z_{k,j}\right|^2 + 2\sum_{k=1}^{K}\sqrt{1+\lambda_k}\Re\{\iota_k^*z_{k,k}\}   \label{eq::ReFormulation_objective}\\
		{\rm s.t.}~  &S_r \geq \Gamma_r, \label{eq::ReFormulation_b}\\ &\frac{\left|z_{k,k}\right|^2}{\sum\limits_{j = 1, j \neq k}^{K+M}\!\left|z_{k,j}\right|^2 + \sigma_k^2} \geq \Gamma_k, 1 \leq k \leq K, \label{eq::ReFormulation_c}\\
		& \frac{|x_k|^2}{\sum\limits_{j = 1, j \neq k}^{K+M}|x_j|^2 + \sigma_e^2} \leq \Gamma_{e,k}, 1 \leq k \leq K, \label{eq::ReFormulation_d}\\
		& \mathbf{g}_0^{\rm H}(\mathbf{\Phi})\mathbf{w}_j = x_j, \mathbf{g}^{\rm H}_k(\mathbf{\Phi},\mathbf{u}_k)\mathbf{w}_j = z_{k,j}, \nonumber \\ 
		& 1 \leq k \leq K, 1\leq j \leq K+M \label{eq::ReFormulation_e}\\
		& \eqref{eq::originalFormulation_e},\eqref{eq::originalFormulation_g}, \eqref{eq::originalFormulation_h}, \nonumber
	\end{align}
\end{subequations}
with $\Omega \triangleq \{\boldsymbol{\lambda}, \boldsymbol{\iota}, x_j,  z_{k,j}, 1 \leq k \leq K,  1 \leq j \leq K+M\}$.
Next, we add \eqref{eq::ReFormulation_e} to \eqref{eq::ReFormulation_objective} as penalty terms and reformulate the optimization problem as

\vspace{-1 em}
\begin{subequations}  \label{eq::ReFormulation_2} \small
	\begin{align}
		&\!\!\!\!\!\!\!\!\!\!\max_{\mathbf{W},\mathbf{\Phi},\mathbf{r}_{B},\{\mathbf{u}_k\},\Omega} \quad \sum_{k=1}^{K}\log_2(1+\lambda_k) - \sum_{k=1}^{K}\lambda_k - \sum_{k=1}^{K}|\iota_k|^2\sigma_k^2  \nonumber\\ 
		&\!\!\!\!\!\!\!\!\!\!-\sum_{k=1}^{K}|\iota_k|^2\!\!\!\!\sum\limits_{j=1,j \neq k}^{K+M}\!\!\!\!\! \left|z_{k,j}\right|^2 + 2\sum_{k=1}^{K}\!\!\!\sqrt{1+\lambda_k}\Re\{\iota_k^*z_{k,k}\} \!-\! \frac{1}{2\rho} \times \nonumber \\
		&\!\!\!\!\!\!\!\!\!\! \Big(\! \sum_{j=1}^{K+M}\!\!\left|\mathbf{g}_0^{\rm H}(\mathbf{\Phi})\mathbf{w}_{j} \!-\! x_{j}\right|^2  \!\!+\!\! \sum_{k=1}^{K}\sum_{j=1}^{K+M}\!\!\left|\mathbf{g}^{\rm H}_k(\mathbf{\Phi},\mathbf{u}_k)\mathbf{w}_j \!-\! z_{k,j}\right|^2 \!\Big) \label{eq::ReFormulation_2_objective}\\
		{\rm s.t.}~ &  \eqref{eq::originalFormulation_e},\eqref{eq::originalFormulation_g},\eqref{eq::originalFormulation_h},\eqref{eq::ReFormulation_b},\eqref{eq::ReFormulation_c},\eqref{eq::ReFormulation_d},  \nonumber
	\end{align}
\end{subequations}
where $\rho > 0$ denotes the penalty factor applied to penalize deviations from the equality constraints in \eqref{eq::ReFormulation_e}. 
%可加回
%It is noted that even though the equality constraints are relaxed in \eqref{eq::ReFormulation_2}, as $\rho \rightarrow 0$, the solution derived from solving \eqref{eq::ReFormulation_2} typically adheres to the constraints in \eqref{eq::ReFormulation_e}. 
To solve problem \eqref{eq::ReFormulation_2}, we introduce a two-layer iterative algorithm, where the penalty parameter is gradually updated in the outer layer. In the inner layer, we alternatively optimize the variables given any fixed penalty factor.

\subsection{Inner Layer Optimization}
We partition all the variables in \eqref{eq::ReFormulation_2} into five distinct blocks in the inner-layer iterations, including the auxiliary variable set $\Omega$, receive filter $\mathbf{r}_B$, transmit beamformer $\mathbf{W}$, RIS coefficient matrix $\mathbf{\Phi}$, and positions of MA $\{\mathbf{u}_k\}_{k=1}^{K}$. Then, we alternately  optimize each of them with all the other variables being fixed.
\subsubsection{Optimization of $\Omega$ with given $\mathbf{r}_B$, $\mathbf{W}$, $\mathbf{\Phi}$, and $\{\mathbf{u}_k\}_{k=1}^{K}$}
The auxiliary variable optimization subproblem can be expressed as

\vspace{-1 em}
\begin{subequations}  \label{eq::subproblem_Omega} \small
	\begin{align}
		&\!\!\!\!\!\!\!\!\!\!\max_{\Omega} \quad \sum_{k=1}^{K}\log_2(1+\lambda_k) - \sum_{k=1}^{K}\lambda_k - \sum_{k=1}^{K}|\iota_k|^2\sigma_k^2  \nonumber\\ 
		&\!\!\!\!\!\!\!\!\!\!-\sum_{k=1}^{K}|\iota_k|^2\!\!\!\!\sum\limits_{j=1,j \neq k}^{K+M}\!\!\!\!\! \left|z_{k,j}\right|^2 + 2\sum_{k=1}^{K}\!\!\!\sqrt{1+\lambda_k}\Re\{\iota_k^*z_{k,k}\} \!-\! \frac{1}{2\rho} \times \nonumber\\
		&\!\!\!\!\!\!\!\!\!\! \Big(\! \sum_{j=1}^{K+M}\!\!\left|\mathbf{g}_0^{\rm H}(\mathbf{\Phi})\mathbf{w}_{j} \!-\! x_{j}\right|^2  \!\!+\!\! \sum_{k=1}^{K}\sum_{j=1}^{K+M}\!\!\left|\mathbf{g}^{\rm H}_k(\mathbf{\Phi},\mathbf{u}_k)\mathbf{w}_j \!-\! z_{k,j}\right|^2 \!\Big) \label{eq::subproblem_Omega_objective}\\
		{\rm s.t.}~ &  \eqref{eq::ReFormulation_c},\eqref{eq::ReFormulation_d}. \nonumber
	\end{align}
\end{subequations}
For given the other variables, optimizing $\boldsymbol{\lambda}$ transforms into an unconstrained convex problem. As such, we can obtain its optimal solution by taking the gradient of  \eqref{eq::subproblem_Omega_objective} w.r.t. $\boldsymbol{\lambda}$ and equating it to zero, resulting in
\begin{equation} \label{eq::optimal_lambda} \small
	\lambda_k^{\rm opt} = \frac{\left|\mathbf{g}_k^{\rm H}(\mathbf{\Phi},\mathbf{u}_k)\mathbf{w}_k\right|^2}{\sum_{j=1,j \neq k}^{K+M}\left|\mathbf{g}_k^{\rm H}(\mathbf{\Phi},\mathbf{u}_k)\mathbf{w}_j\right|^2 + \sigma_k^2}, 1 \leq k \leq K.
\end{equation}
Similarly, the optimal $\iota_k^{\rm opt}$ is given by
\begin{equation} \label{eq::optimal_c} \small
	\iota_k^{\rm opt} = \frac{\sqrt{1+\lambda_k}\mathbf{g}_k^{\rm H}(\mathbf{\Phi},\mathbf{u}_k)\mathbf{w}_k}{\sum_{j=1,j \neq k}^{K+M}\left|\mathbf{g}_k^{\rm H}(\mathbf{\Phi},\mathbf{u}_k)\mathbf{w}_j\right|^2 + \sigma_k^2}, 1 \leq k \leq K.
\end{equation}

%\subsubsection{Auxiliary Variable Set $\Omega$ Optimization}

Then, it is noted that optimization variables corresponding to different blocks, i.e., $\{z_{k,j}, \forall j\}$ for $1 \leq k \leq K$ and $\{x_j, \forall j\}$, are independent in \eqref{eq::subproblem_Omega_objective}, \eqref{eq::ReFormulation_c}, and \eqref{eq::ReFormulation_d}. As such, problem \eqref{eq::subproblem_Omega} can be further decomposed into $K+1$ subproblems, which can be addressed in parallel. On the one hand, the subproblem associated with the $k$-th block $\{z_{k,j}, \forall j\}$ by ignoring irrelevant constants w.r.t. $\{z_{j,k}\}$ in \eqref{eq::subproblem_Omega} is expressed as

\vspace{-1 em}
\begin{subequations}  \label{eq::subproblem_Omega_z} \small
	\begin{align}
		\min_{\{z_{k,j}\}}& \quad  |\iota_k|^2\sum\limits_{j=1,j \neq k}^{K+M}\left|z_{k,j}\right|^2 - 2\sqrt{1+\lambda_k}\Re\{\iota_k^*z_{k,k}\} \nonumber\\ & \quad\quad + \frac{1}{2\rho} 
		 \sum_{j=1}^{K+M}\left|\mathbf{g}^{\rm H}_k(\mathbf{\Phi},\mathbf{u}_k)\mathbf{w}_j - z_{k,j}\right|^2 \label{eq::subproblem_Omega_z_objective}\\
		{\rm s.t.}&~  \quad \eqref{eq::ReFormulation_c}. \nonumber
	\end{align}
\end{subequations}
It is observed that  \eqref{eq::subproblem_Omega_z_objective} is convex since we have $|\iota_k|^2 + \frac{1}{2\rho} > 0$, despite the non-convex constraint. As demonstrated in \cite{bertsekas2016nonlinear}, strong duality is applicable to any optimization problem with a quadratic objective and a single quadratic inequality constraint, provided Slater’s condition is met. Hence, problem \eqref{eq::subproblem_Omega_z} can be addressed via addressing its dual problem. Specifically, by introducing a dual variable $\mu_{1,k} \geq 0$ corresponding to constraint \eqref{eq::ReFormulation_c}, the Lagrangian function for problem \eqref{eq::subproblem_Omega_z} is expressed as
\begin{equation} \label{eq::subproblem_Omega_z_Lagrangian} \small
	\begin{aligned}
			\mathcal{L}_1(z_{k,j},\mu_1) &= |\iota_k|^2\sum\limits_{j=1,j \neq k}^{K+M}\left|z_{k,j}\right|^2 - 2\sqrt{1+\lambda_k}\Re\{\iota_k^*z_{k,k}\} \\ &\!\!\!\!\!\!\!\!\!\!\!\! + \frac{1}{2\rho} 
			\sum_{j=1}^{K+M}\left|\mathbf{g}^{\rm H}_k(\mathbf{\Phi},\mathbf{u}_k)\mathbf{w}_j - z_{k,j}\right|^2 \\  
			&\!\!\!\!\!\!\!\!\!\!\!\!+ \mu_{1,k} \times \bigg(\Gamma_k\Big(\sum\limits_{j = 1, j \neq k}^{K+M}\!\left|z_{k,j}\right|^2 + \sigma_k^2\Big) - \left|z_{k,k}\right|^2\bigg).
	\end{aligned}
\end{equation}
Consequently, the associated dual function is represented by $\min_{\{z_{k,j}\}}\mathcal{L}_1(z_{k,j},\mu_{1,k})$. To ensure that the dual function remains bounded, it can be readily verified that $0 \leq \mu_{1,k} < 1$. Then, by differentiating $\mathcal{L}_1(z_{k,j},\mu_{1,k})$ w.r.t. $z_{k,j}$ and equating the derivative to zero, we obtain the optimal solution as
\begin{equation} \label{optimal_z} \small
	\begin{aligned} 
		z_{k,j}^{\rm opt}(\mu_{1,k}) \!\!=\!\! \left\{ \begin{array}{l} \!\!\!\!\!
			\frac{\mathbf{g}^{\rm H}_k(\mathbf{\Phi},\mathbf{u}_k)\mathbf{w}_j}{4\iota_k\rho + 1 + 2\mu_{1,k}\Gamma_k\rho}, 1 \leq j \leq K+M, j \neq k,\\
			\!\!\!\!\! \frac{\mathbf{g}^{\rm H}_k(\mathbf{\Phi},\mathbf{u}_k)\mathbf{w}_k + 2\rho\sqrt{1+\lambda_k}}{1 - 2\mu_{1,k}\rho},j = k.
		\end{array} \right.
	\end{aligned}
\end{equation}
Recall that for the optimal solution $z_{k,j}^{\rm opt}(\mu_{1,k})$ and $\mu_{1,k}^{\rm opt}$, the  complementary slackness condition must be
satisfied as follows \cite{boyd2004convex}
\begin{equation} \label{eq::complementary_slackness}  \small
	\mu_{1,k}^{\rm opt} \bigg(\Gamma_k\Big(\sum\limits_{j = 1, j \neq k}^{K+M}\!\left|z_{k,j}^{\rm opt}(\mu_{1,k})\right|^2 + \sigma_k^2\Big) - \left|z_{k,k}^{\rm opt}(\mu_{1,k})\right|^2\bigg) = 0.
\end{equation}
Next, we verify whether $\mu_{1,k}^{\rm opt} = 0$ is indeed the optimal solution. If constraint \eqref{eq::ReFormulation_c} is not satisfied with equality at the optimal solution, i.e., $\mu_{1,k}^{\rm opt} = 0$, then the optimal solution to problem \eqref{eq::subproblem_Omega_z} is obtained as $z_{k,j}^{\rm opt}(0)$. Otherwise, the optimal $\mu_{1,k}^{\rm opt}$ will be a positive value ($0 < \mu_{1,k}^{\rm opt} < 1$) that satisfies \eqref{eq::ReFormulation_c}, i.e.,
\begin{equation} \small \label{eq::positive_mu}
	\Gamma_k\Big(\sum\limits_{j = 1, j \neq k}^{K+M}\!\left|z_{k,j}^{\rm opt}(\mu_{1,k})\right|^2 + \sigma_k^2\Big) - \left|z_{k,k}^{\rm opt}(\mu_{1,k})\right|^2 = 0.
\end{equation}
In addition, it can be checked that $|z_{k,j}^{\rm opt}(\mu_{1,k})|$ for $j \neq k$ decreases monotonically with $\mu_{1,k}$, whereas  $|z_{k,k}^{\rm opt}(\mu_{1,k})|$ increases monotonically with $\mu_{1,k}$ for $0 < \mu_{1,k} < 1$. Consequently,  $\mu_{1,k}^{\rm opt}$ can be determined using a simple bisection search method within the interval $[0,1]$.

On the other hand, the subproblem w.r.t. block $\{x_j, \forall j\}$ can be expressed as

\vspace{-1 em}
\begin{subequations}  \label{eq::subproblem_Omega_x} \small
	\begin{align}
		\min_{\{x_j\}}& \quad   
		 \sum_{j=1}^{K+M}|\mathbf{g}_0^{\rm H}(\mathbf{\Phi})\mathbf{w}_{j} - x_{j}|^2 \\
		  {\rm s.t.}&~  \quad \eqref{eq::ReFormulation_d},  \nonumber
	\end{align}
\end{subequations}
which is a quadratically constrained quadratic program problem. Following \cite{bertsekas2016nonlinear}, strong duality applies to problem \eqref{eq::subproblem_Omega_x}. Therefore, problem \eqref{eq::subproblem_Omega_x} can be solved similarly to \eqref{eq::subproblem_Omega_z}. Specifically, by introducing dual variable $\mu_2$ ($\mu_2 \geq 0$) corresponding to constraint \eqref{eq::ReFormulation_d}, the Lagrangian function of \eqref{eq::subproblem_Omega_x} is obtained as
\begin{equation} \label{subproblem_Omega_x_Lagrangian} \small
	\begin{aligned}
		\mathcal{L}_2(x_j,\mu_2) &= \sum_{j=1}^{K+M}|\mathbf{g}_0^{\rm H}(\mathbf{\Phi})\mathbf{w}_{j} - x_{j}|^2 \\ & \quad \, + \mu_2\bigg(|x_k|^2 - \Gamma_{e,k}\Big(\sum\limits_{j = 1, j \neq k}^{K+M}|x_j|^2 + \sigma_e^2\Big)\bigg).
	\end{aligned}
\end{equation}
Thus, $\min_{\{x_j\}}\mathcal{L}_2(x_j,\mu_2)$ is the dual function for problem \eqref{eq::subproblem_Omega_x}.
% 如果需要，可以写成引理的形式
Note that we must have $1 \leq \mu_2 < 1$ to guarantee the dual function is bounded. Then, by applying the first-order optimality conditions, the optimal solution of the dual function is given by
\begin{equation} \label{eq::optimal_x} \small
	\begin{aligned}
		&x^{\rm opt}_j(\mu_2) = \left\{\!\! \begin{array}{l}
			\frac{\mathbf{g}^{\rm H}_0(\mathbf{\Phi})\mathbf{w}_j}{1-\mu_2\Gamma_{e,k}}, 1 \leq j \leq K+M, j \neq k,\\
			\frac{\mathbf{g}^{\rm H}_0(\mathbf{\Phi})\mathbf{w}_k}{1 + \mu_2}, j = k.
		\end{array} \right.
	\end{aligned}
\end{equation}
Similar to \eqref{eq::complementary_slackness}, the complementary slackness condition for the optimal dual variable $\mu_2^{\rm opt}$ is given by
\begin{equation} \small
	\mu_2^{\rm opt}\bigg(|x^{\rm opt}_k(\mu_2^{\rm opt})|^2 - \Gamma_{e,k}\Big(\sum\limits_{j = 1, j \neq k}^{K+M}|x^{\rm opt}_j(\mu_2^{\rm opt})|^2 + \sigma_e^2\Big)\bigg) = 0.
\end{equation}
%Similarly, we check whether $\mu_2^{\rm opt} = 0$ is the optimal solution or not. 
If constraint \eqref{eq::ReFormulation_d} is not satisfied with equality at the optimal solution, i.e., $\mu_2^{\rm opt} = 0$, the optimal solution for problem \eqref{eq::subproblem_Omega_x} is given by $x_j^{\rm opt}(0)$. Otherwise, similar to \eqref{eq::positive_mu},  $\mu_2^{\rm opt}$ is positive (i.e., $0 < \mu_2^{\rm opt} < 1$) that satisfies 
\begin{equation} \small
|x^{\rm opt}_k(\mu_2^{\rm opt})|^2 - \Gamma_{e,k}\bigg(\sum\limits_{j = 1, j \neq k}^{K+M}|x^{\rm opt}_j(\mu_2^{\rm opt})|^2 + \sigma_e^2\bigg)=0,
\end{equation}
and $\mu_2^{\rm opt}$ can be determined using a simple bisection search method within the interval $[0,1]$.

% 类同，精简删除
%Finally, it can be readily verified that $|x_j^{\rm opt}(\mu_2)|$ for $j \neq k$ is monotonically increasing w.r.t. $\mu_2$, while $|x_k^{\rm opt}(\mu_2)|$ is monotonically decreasing w.r.t. $\mu_2$ for $0 < \mu_2 < 1$. As such, the optimal $\mu_2^{\rm opt}$ can be obtained by applying a simple bisection search method between 0 and 1.

\subsubsection{Optimization of $\mathbf{r}_B$ with given $\Omega$, $\mathbf{W}$, $\mathbf{\Phi}$, and $\{\mathbf{u}_k\}_{k=1}^{K}$} In this subproblem, we first approximate the received echo signal SNR in \eqref{eq::metric_radar_SNR} by applying Jensen's inequality $\mathbb{E}\left\{f(x)\right\} \geq f\left(\mathbb{E}\{x\}\right)$, which is given by
\begin{equation} \label{eq::radar_SNR_lb} \small
	S_r \geq \frac{L\sigma_t^2 \left|\mathbf{r}_B^{\rm H}\left(\mathbf{I}_{K+M} \otimes \mathbf{H}_t(\mathbf{\Phi})\right)\tilde{\mathbf{w}}\right|^2}{\sigma_r^2\mathbf{r}_{B}^{\rm H}\mathbf{r}_{B}},
\end{equation}
with $\mathbb{E}\left\{\mathbf{S}\mathbf{S}^{\rm H}\right\} = L\mathbf{I}_{K+M}$,
%, and the right-hand side of \eqref{eq::radar_SNR_lb} represents the worst-case achieved radar SNR in \eqref{eq::originalFormulation_b}.
Thus, the subproblem for optimizing $\mathbf{r}_{B}$ is given by
\begin{equation} \small \label{eq::Subproblem_rBS}
		\max_{\mathbf{r}_{B}} \quad \frac{L\sigma_t^2 \left|\mathbf{r}_B^{\rm H}\left(\mathbf{I}_{K+M} \otimes \mathbf{H}_t(\mathbf{\Phi})\right)\tilde{\mathbf{w}}\right|^2}{\sigma_r^2\mathbf{r}_{B}^{\rm H}\mathbf{r}_{B}}.
\end{equation}
%\vspace{-1 em}
%\begin{subequations} \small \label{eq::Subproblem_rBS}
%	\begin{align}
%		\max_{\mathbf{r}_{B}}& \quad \frac{L\sigma_t^2 \left|\mathbf{r}_B^{\rm H}\left(\mathbf{I}_{K+M} \otimes \mathbf{H}_t(\mathbf{\Phi})\right)\tilde{\mathbf{w}}\right|^2}{\sigma_r^2\mathbf{r}_{B}^{\rm H}\mathbf{r}_{B}} \\
%		{\rm s.t.}&~ \quad  \eqref{eq::originalFormulation_f}, \nonumber
%	\end{align}
%\end{subequations}
Problem \eqref{eq::Subproblem_rBS} is a Rayleigh quotient, and we can determine its optimal solution as
\begin{equation} \label{eq::optimal_r} \small
	\mathbf{r}_{B}^{\rm opt} = \frac{\left(\mathbf{I}_{K+M} \otimes \mathbf{H}_t(\mathbf{\Phi})\right)\tilde{\mathbf{w}}}{\tilde{\mathbf{w}}^{\rm H} \mathbf{I}_{K+M} \otimes \mathbf{H}^{\rm H}_t(\mathbf{\Phi})\mathbf{H}_t(\mathbf{\Phi}) \tilde{\mathbf{w}}}.
\end{equation}
In addition, it is noted that $\mathbf{r}_{B}^{\rm opt}e^{j\theta}$ also serves as an optimal solution to \eqref{eq::Subproblem_rBS} for any given phase $\theta$, as the phase of $\mathbf{r}_B^{\rm H}\tilde{\mathbf{y}}_r$ will not affect the result in \eqref{eq::echo_signals_L_filter_vectorized_receivedBeamformer}. As such, after obtaining the optimal receive beamformers, $\mathbf{r}_B^{\rm H}\left(\mathbf{I}_{K+M} \otimes \mathbf{H}_t(\mathbf{\Phi})\right)\tilde{\mathbf{w}}$ in \eqref{eq::ReFormulation_b} is restricted to  non-negative real values, and thus \eqref{eq::ReFormulation_b} can be transformed into 
\begin{equation} \small \label{eq::sensing_constraint_new}
	\Re\{\mathbf{r}_B^{\rm H}\left(\mathbf{I}_{K+M} \otimes \mathbf{H}_t(\mathbf{\Phi})\right)\tilde{\mathbf{w}}\} \geq \sqrt{\frac{\Gamma_k\sigma_r^2\mathbf{r}_{B}^{\rm H}\mathbf{r}_{B}}{L\sigma_t^2}}.
\end{equation}

%\begin{subequations}  \label{eq::Subproblem_1}
%	\begin{align}
%		\max_{\mathbf{r}_{B}} & \frac{\mathbf{r}_{B}^{\rm H}\mathbf{B}^{\rm H}(\mathbf{W},\mathbf{\Phi})\mathbf{B}(\mathbf{W},\mathbf{\Phi})\mathbf{r}_{B}}{\sigma_b^2\mathbf{r}_{B}^{\rm H}\mathbf{r}_{B}} \label{eq::Subproblem_1_objective}\\
%		{\rm s.t.}~ & \left\|\mathbf{r}_{B}^{\rm H}\right\|_2 = 1,
%		\label{eq::Subproblem_1_a}
%	\end{align}
%\end{subequations}
%where $\mathbf{B}^{\rm H}(\mathbf{W},\mathbf{\Phi})$ is given by
%\begin{equation} \small
%	\mathbf{B}^{\rm H}(\mathbf{W},\mathbf{\Phi}) = \mathbf{H}^{\rm H}_{BR}\mathbf{\Phi}^{\rm H}\mathbf{A}_{RE}\mathbf{\Phi}\mathbf{H}_{BR}\mathbf{W}.
%\end{equation}
%It is noted that problem \eqref{eq::Subproblem_1} is a typical Rayleigh quotient [x], whose optimal solution can be easily obtained as the eigenvector corresponding to the largest eigenvalue of the matrix $\frac{\mathbf{H}(\mathbf{W},\mathbf{\Phi})}{\sigma_b^2}$.

\subsubsection{Optimization of $\mathbf{W}$ with given $\Omega$, $\mathbf{r}_B$, $\mathbf{\Phi}$, and $\{\mathbf{u}_k\}_{k=1}^{K}$}
The subproblem of optimizing transmit beamforming $\mathbf{W}$ at the BS can be represented by

\vspace{-1 em}
\begin{subequations}  \label{eq::subproblem_W} \small
	\begin{align}
		\min_{\mathbf{W}}& \sum_{j=1}^{K+M}\!\!\left|\mathbf{g}_0^{\rm H}(\mathbf{\Phi})\mathbf{w}_{j} \!-\! x_{j}\right|^2  \!\!+\!\! \sum_{k=1}^{K}\!\!\!\sum_{j=1}^{K+M}\!\!\left|\mathbf{g}^{\rm H}_k(\mathbf{\Phi},\mathbf{u}_k)\mathbf{w}_j \!-\! z_{k,j}\right|^2 \\
		 {\rm s.t.}~  
%		 &	\Re\{\mathbf{r}_B^{\rm H}\left(\mathbf{I}_{K+M} \otimes \mathbf{H}_t(\mathbf{\Phi})\right)\tilde{\mathbf{w}}\} \geq \sqrt{\frac{\Gamma_k\sigma_r^2\mathbf{r}_{B}^{\rm H}\mathbf{r}_{B}}{L\sigma_t^2}}, \label{eq::subproblem_constraint_1}\\
		&\eqref{eq::sensing_constraint_new},\eqref{eq::originalFormulation_e}, \nonumber
	\end{align}
\end{subequations}
which is convex and can be efficiently addressed using various well-developed toolboxes, such as CVX \cite{boyd2004convex}.

\subsubsection{Optimization of $\mathbf{\Phi}$ with given $\Omega$, $\mathbf{r}_B$, $\mathbf{W}$, and $\{\mathbf{u}_k\}_{k=1}^{K}$}
The subproblem for optimizing $\mathbf{\Phi}$ can be expressed as

\vspace{-1 em}
\begin{subequations}  \label{eq::subproblem_Phi} \small
	\begin{align}
		\min_{\mathbf{\Phi}}& \sum_{j=1}^{K+M}\!\!\left|\mathbf{g}_0^{\rm H}(\mathbf{\Phi})\mathbf{w}_{j} \!-\! x_{j}\right|^2  \!\!+\!\! \sum_{k=1}^{K}\!\!\!\sum_{j=1}^{K+M}\!\!\left|\mathbf{g}^{\rm H}_k(\mathbf{\Phi},\mathbf{u}_k)\mathbf{w}_j \!-\! z_{k,j}\right|^2 \label{eq::subproblem_Phi_objective}\\
		{\rm s.t.}~  
%		&	\Re\{\mathbf{r}_B^{\rm H}\left(\mathbf{I}_{K+M} \otimes \mathbf{H}_t(\mathbf{\Phi})\right)\tilde{\mathbf{w}}\} \geq \sqrt{\frac{\Gamma_k\sigma_r^2\mathbf{r}_{B}^{\rm H}\mathbf{r}_{B}}{L\sigma_t^2}}, \label{eq::subproblem_Phi_constraint}\\
		&\eqref{eq::sensing_constraint_new},\eqref{eq::originalFormulation_g}. \nonumber
	\end{align}
\end{subequations}
Because of the implicit function w.r.t $\mathbf{\Phi}$ in both \eqref{eq::subproblem_Phi_objective} and \eqref{eq::sensing_constraint_new}, along with the non-convex unit-modulus constraint as in \eqref{eq::originalFormulation_g}, problem \eqref{eq::subproblem_Phi} cannot be directly addressed.
To tackle these issues, we first propose to handle \eqref{eq::subproblem_Phi_objective} by constructing a convex surrogate function
%to upper-bound it
via the MM technique \cite{sun2016majorization}. Specifically, for quadratic function $\boldsymbol{\phi}^{\rm H}\mathbf{A}\boldsymbol{\phi}$,  its surrogate function at any given point $\boldsymbol{\phi}^{(t)} = [\phi_1^{(t)}, \cdots, \phi_N^{(t)}]$, denoted by $f(\boldsymbol{\phi}|\boldsymbol{\phi}^{(t)})$, can be given by
\begin{equation} \label{eq::surrogate_phi} \small
	\begin{aligned}
		f(\boldsymbol{\phi}|\boldsymbol{\phi}^{(t)}) =  &\lambda_{\max}\boldsymbol{\phi}^{\rm H}\boldsymbol{\phi} - 2\Re\{\boldsymbol{\phi}^{\rm H}(\lambda_{\max}\mathbf{I}_N-\mathbf{A})\boldsymbol{\phi}^{(t)}\} \\
		& + (\boldsymbol{\phi}^{(t)})^{\rm H}(\lambda_{\max}\mathbf{I}_N-\mathbf{A})\boldsymbol{\phi}^{(t)},
	\end{aligned}
\end{equation}
where $\mathbf{A} \in \mathbb{C}^{N \times N}$ is positive semi-definite and $\lambda_{\max}$ represents its largest eigenvalue. As such, we expand the quadratic terms of the objective function in \eqref{eq::subproblem_Phi} and recall that we have $\mathbf{h}_{\kappa}^{\rm H}\mathbf{\Phi} = \boldsymbol{\phi}^{\rm T}{\rm diag}\{\mathbf{h}_{\kappa}^{\rm H}\}$ for $\forall \kappa \in \{0,1,\cdots,K\}$ since $\mathbf{\Phi}$ is diagonal. Furthermore, based on the fact of  $\boldsymbol{\phi}^{\rm H}\boldsymbol{\phi} = N$ and by neglecting the constant terms w.r.t. $\{\phi_n\}_{n=1}^{N}$, we can turn to solve the  approximate optimization problem as follows

\vspace{-1 em}
\begin{subequations} \small  \label{eq::subproblem_phi_objective_approximation}
	\begin{align}
		\max_{\boldsymbol{\phi}}& \quad \Re\{\boldsymbol{\phi}^{\rm H}\mathbf{q}^{(t)}\}, \\
		{\rm s.t.}&~ \quad   \eqref{eq::sensing_constraint_new},\eqref{eq::originalFormulation_g}, \nonumber
	\end{align}
\end{subequations}
where $\mathbf{q}^{(t)}$ is computed as
\begin{equation} \label{eq::q} \small
	\begin{aligned}
		\mathbf{q}^{(t)} &= \sum_{j=1}^{K+M}\left(\left(\lambda_{\max,1,j}\mathbf{I}_N - \mathbf{\Upsilon}_{0,j}\right)\boldsymbol{\phi}^{(t)} + x_j^{\rm H}\mathbf{F}_0\mathbf{w}_j \right) \\
		& \!\!+ \sum_{k=1}^{K}\sum_{j=1}^{K+M}\left(\left(\lambda_{\max,2,k,j}\mathbf{I}_N - \mathbf{\Upsilon}_{k,j}\right)\boldsymbol{\phi}^{(t)} + \mathbf{F}_k\mathbf{w}_jz_{k,j} \right),
	\end{aligned}
\end{equation}
with $\mathbf{F}_k \!\!=\!\! {\rm diag}\{\mathbf{h}_k^{\rm H}\}\mathbf{H}$, $\mathbf{F}_0 \!\!=\!\! {\rm diag}\{\mathbf{h}_0^{\rm H}\}\mathbf{H}$, $\mathbf{\Upsilon}_{k,j} \!\!=\!\! \mathbf{F}_k\mathbf{w}_j\mathbf{w}_j^{\rm H}\mathbf{F}_k^{\rm H}$, $\mathbf{\Upsilon}_{0,j} \!=\! \mathbf{F}_0\mathbf{w}_j\mathbf{w}_j^{\rm H}\mathbf{F}_0^{\rm H}$, $\lambda_{\max,1,j}$ and $\lambda_{\max,2,k,j}$ are the maximum eigenvalues of $\mathbf{\Upsilon}_{0,j}$ and $\mathbf{\Upsilon}_{k,j}$, respectively.

Second, to tackle \eqref{eq::sensing_constraint_new}, it is necessary to isolate the optimization variable $\mathbf{\Phi}$ from the Kronecker product expression embedded in $\mathbf{H}_t(\mathbf{\Phi})$.
According to the definition of $\mathbf{H}_t(\mathbf{\Phi})$ and utilizing the property of the Kronecker product, i.e., ${\rm vec}\{\mathbf{ABC}\} = (\mathbf{C}^{\rm T} \otimes \mathbf{A}){\rm vec}\{\mathbf{B}\}$, we can have
%transform the term $\left(\mathbf{I}_{K+M} \otimes \mathbf{H}_t(\mathbf{\Phi})\right)\tilde{\mathbf{w}}$ into
\begin{equation} \label{eq::convert_Phi} \small
	\begin{aligned}
		&\left(\mathbf{I}_{K+M} \otimes \mathbf{H}_t(\mathbf{\Phi})\right)\tilde{\mathbf{w}} = \left(\mathbf{I}_{K+M} \otimes (\mathbf{F}_0^{\rm H}\boldsymbol{\phi}\boldsymbol{\phi}^{\rm T}\mathbf{F}_0) \right)\tilde{\mathbf{w}} \\
		&\!=\! {\rm vec}\left\{\mathbf{F}_0^{\rm H}\boldsymbol{\phi}\boldsymbol{\phi}^{\rm T}\mathbf{F}_0\mathbf{W}\mathbf{I}_{K+M}\right\}\!=\! \left(\left(\mathbf{F}_0\mathbf{W}\right)^{\rm T} \otimes \mathbf{F}_0^{\rm H}\right){\rm vec}\{\boldsymbol{\phi}\boldsymbol{\phi}^{\rm T}\}.
	\end{aligned}
\end{equation}
Substituting \eqref{eq::convert_Phi} into \eqref{eq::sensing_constraint_new}, we have
\begin{equation} \label{eq::SNR_radar_constraint_transformation_1} \small
	\Re\{\boldsymbol{\phi}^{\rm T}\tilde{\mathbf{C}}_0\boldsymbol{\phi}\} \geq \sqrt{\frac{\Gamma_k\sigma_r^2\mathbf{r}_{B}^{\rm H}\mathbf{r}_{B}}{L\sigma_t^2}},
\end{equation}
where $\tilde{\mathbf{C}}_0 \in \mathbb{C}^{N \times N}$ represents a reshaped matrix from $\mathbf{c}_0^{\rm T} \triangleq  (\left(\mathbf{F}_0\mathbf{W}\right)^{\rm T} \otimes \mathbf{F}_0^{\rm H})^{\rm T} \mathbf{r}_B^*$ such that $\mathbf{c}_0^{\rm T} = {\rm vec}\{\tilde{\mathbf{C}}_0\}$. However, \eqref{eq::SNR_radar_constraint_transformation_1} remains non-convex because the left-hand side of \eqref{eq::SNR_radar_constraint_transformation_1} is a non-concave function. To tackle this issue, we transform the complex-valued function $\Re\{\boldsymbol{\phi}^{\rm T}\tilde{\mathbf{C}}_0\boldsymbol{\phi}\}$ into the real-valued term $\overline{\boldsymbol{\phi}}^{\rm T}\overline{\mathbf{C}}_0 \overline{\boldsymbol{\phi}}$, which is achieved by defining $\overline{\boldsymbol{\phi}} \triangleq [\Re\{\boldsymbol{\phi}^{\rm T}\} \quad \Im\{\boldsymbol{\phi}^{\rm T}\}]^{\rm T}$ and $\overline{\mathbf{C}}_0 \triangleq \bigg[\begin{array}{cc}
	\Re\{\tilde{\mathbf{C}}_0\} & -\Im\{\tilde{\mathbf{C}}_0\} \\
	-\Im\{\tilde{\mathbf{C}}_0\} & -\Re\{\tilde{\mathbf{C}}_0\}                                 \end{array}\bigg]$.
Then, we apply the MM algorithm \cite{sun2016majorization} to derive a series of tractable surrogate functions for $\overline{\boldsymbol{\phi}}^{\rm T}\overline{\mathbf{C}}_0 \overline{\boldsymbol{\phi}}$. Specifically, given solution $\boldsymbol{\phi}^{(t)}$ in $t$-th iteration, an approximate lower bound is established using the first-order Taylor expansion as follows
\begin{equation} \small \label{eq::first_taylor}
	\begin{aligned}
		\overline{\boldsymbol{\phi}}^{\rm T}\overline{\mathbf{C}}_0 \overline{\boldsymbol{\phi}} & \geq (\overline{\boldsymbol{\phi}}^{(t)})^{\rm T}\overline{\mathbf{C}}_0 \overline{\boldsymbol{\phi}}^{(t)} + (\overline{\boldsymbol{\phi}}^{(t)})^{\rm T}(\overline{\mathbf{C}}_0+\overline{\mathbf{C}}_0^{\rm T})(\overline{\boldsymbol{\phi}}-\overline{\boldsymbol{\phi}}^{(t)}) \\
		& = -(\overline{\boldsymbol{\phi}}^{(t)})^{\rm T}\overline{\mathbf{C}}_0 \overline{\boldsymbol{\phi}}^{(t)} + \Re\{(\overline{\boldsymbol{\phi}}^{(t)})^{\rm T}(\overline{\mathbf{C}}_0+\overline{\mathbf{C}}_0^{\rm T})\boldsymbol{\Pi}\boldsymbol{\phi}\} \\
		& = -\varepsilon_2^{(t)} + \Re\{\tilde{\boldsymbol{\Pi}}^{(t)}\boldsymbol{\phi}\},
	\end{aligned}
\end{equation}
with $\boldsymbol{\Pi} = [\mathbf{I}_N \quad j\mathbf{I}_N]^{\rm T}$, $\varepsilon_2^{(t)} = (\overline{\boldsymbol{\phi}}^{(t)})^{\rm T}\overline{\mathbf{C}}_0 \overline{\boldsymbol{\phi}}^{(t)}$, and $\tilde{\boldsymbol{\Pi}}^{(t)} = (\overline{\boldsymbol{\phi}}^{(t)})^{\rm T}(\overline{\mathbf{C}}_0+\overline{\mathbf{C}}_0^{\rm T})\boldsymbol{\Pi}$. As such, in each iteration, the radar SNR constraint can be rewritten by substituting \eqref{eq::first_taylor} into \eqref{eq::SNR_radar_constraint_transformation_1} as 
\begin{equation} \label{eq::SNR_radar_constraint_convex} \small
	\Re\{\mathbf{d}\boldsymbol{\phi}\} \geq \varepsilon_3,
\end{equation}
with $\mathbf{d}\!=\! \tilde{\boldsymbol{\Pi}}^{(t)}$ and $\varepsilon_3 \!=\! \sqrt{\frac{\Gamma_k\sigma_r^2\mathbf{r}_{B}^{\rm H}\mathbf{r}_{B}}{L\sigma_t^2}} + \varepsilon_2^{(t)}$. Thus, \eqref{eq::SNR_radar_constraint_convex} now becomes convex.

Finally, problem \eqref{eq::subproblem_Phi} only involves the non-convex unit-modulus constraint as in \eqref{eq::originalFormulation_g}. In order to tackle this issue, we relax constraint \eqref{eq::originalFormulation_g} as \cite{hua2021joint}
\begin{equation} \label{eq::relaxed_unit-modulus_constraint} \small
	\left|\phi_n\right| \leq 1, \quad 1 \leq n \leq N.
\end{equation}
Based on the previous analysis, problem \eqref{eq::subproblem_Phi} can be rewritten by dropping the constant terms as

\vspace{-1 em}
\begin{subequations}  \label{eq::subproblem_Phi_transformation} \small
	\begin{align}
		\max_{\boldsymbol{\phi}}& \quad  \Re\{\boldsymbol{\phi}^{\rm H}\mathbf{q}^{(t)}\} \\
		{\rm s.t.}& \quad \eqref{eq::SNR_radar_constraint_convex}, \eqref{eq::relaxed_unit-modulus_constraint}, \nonumber
%		{\rm s.t.}& \quad 	\Re\{\mathbf{d}\boldsymbol{\phi}\} \geq \varepsilon_3, \label{eq::subproblem_Phi_transformation_constraint}\\
%		&\quad \left|\phi_n\right| \leq 1, \quad 1 \leq n \leq N,
	\end{align}
\end{subequations}
which is convex and can be efficiently addressed using various well-developed algorithms or toolboxes \cite{boyd2004convex}. Note that according to \cite{zhu2018joint}, the optimal solution for problem \eqref{eq::subproblem_Phi_transformation} always satisfies the unit-modulus constraint \eqref{eq::originalFormulation_g}.

\subsubsection{Optimization of $\{\mathbf{u}_k\}_{k=1}^{K}$ with given $\Omega$, $\mathbf{r}_B$, $\mathbf{W}$, and $\mathbf{\Phi}$}
The subproblem for optimizing the positions of MAs $\{\mathbf{u}_k\}_{k=1}^{K}$ for $K$ users is given by

\vspace{-1 em}
\begin{subequations}  \label{eq::subproblem_uk} \small
	\begin{align}
		 \min_{\{\mathbf{u}_k\}}& \quad \sum_{k=1}^{K}\sum_{j=1}^{K+M}\left|\mathbf{g}^{\rm H}_k(\mathbf{\Phi},\mathbf{u}_k)\mathbf{w}_j - z_{k,j}\right|^2 \label{eq::subproblem_uk_a}\\
		{\rm s.t.}&~ \quad  \eqref{eq::originalFormulation_h}. \nonumber
	\end{align}
\end{subequations}
The main difficulties for tackling problem \eqref{eq::subproblem_uk} lie in the intractability of the objective function and the tight coupling of $\{\mathbf{u}_k\}_{k=1}^{K}$. Thus, we propose to divide $\{\mathbf{u}_k\}_{k=1}^{K}$ into $K$ blocks and optimize the position of MA at $k$-th user, $\mathbf{u}_k$, while keeping all other variables fixed, i.e., $\{\mathbf{u}_{k'}, k' \neq k\}_{k'=1}^{K}$. Moreover, since the variation of $\mathbf{u}_k$ only affects the receive FRM $\mathbf{f}_k(\mathbf{u}_k)$ shown in \eqref{eq::channel_RIS_to_kappa}, we expand \eqref{eq::subproblem_uk_a} as follows
\begin{equation} \small
	 \Xi(\mathbf{u}_k) = \mathbf{f}_k^{\rm H}(\mathbf{u}_k) \mathbf{Q}_k \mathbf{f}_k(\mathbf{u}_k) + 2\Re\left\{\mathbf{f}_k^{\rm H}(\mathbf{u}_k)\mathbf{p}_k\right\} + \varepsilon_4,
\end{equation}
with $\mathbf{Q}_k \triangleq \mathbf{\Sigma}_k\mathbf{G}_k\mathbf{\Phi}\mathbf{H}\overline{\mathbf{W}}\mathbf{H}^{\rm H}\mathbf{\Phi}^{\rm H}\mathbf{G}_k^{\rm H}\mathbf{\Sigma}_k^{\rm H}$, $\mathbf{p}_k \triangleq -z_{k,j}^*\mathbf{\Sigma}_k\mathbf{G}_k\mathbf{\Phi}\mathbf{H}\overline{\mathbf{w}}$, $\overline{\mathbf{W}} \triangleq \sum_{j=1,j \neq k}^{K+M}\mathbf{w}_j\mathbf{w}_j^{\rm H}$, $\overline{\mathbf{w}} \triangleq \sum_{j=1,j \neq k}^{K+M}\mathbf{w}_j$, and $\varepsilon_4 = |z_{k,j}|^2$ is constant w.r.t. $\mathbf{u}_k$. Then, by neglecting  the constant terms from $\Xi(\mathbf{u}_k)$, the subproblem of optimizing MA position for user $k$ can be expressed as

\vspace{-1 em}
\begin{subequations}  \label{eq::subproblem_uk_reformulation} \small
	\begin{align}
		\min_{\mathbf{u}_k}& \quad \mathbf{f}_k^{\rm H}(\mathbf{u}_k) \mathbf{Q}_k \mathbf{f}_k(\mathbf{u}_k) + 2\Re\left\{\mathbf{f}_k^{\rm H}(\mathbf{u}_k)\mathbf{p}_k\right\} \label{eq::subproblem_uk_reformulation_objective}\\
		{\rm s.t.}&~ \quad  \mathbf{u}_k \in \mathcal{C}_k, \label{eq::subproblem_uk_reformulation_b}
	\end{align}
\end{subequations}
which is non-convex due to \eqref{eq::subproblem_uk_reformulation_objective}. Next, we tackle this issue by employing the MM algorithm  \cite{sun2016majorization}. To construct the surrogate function for \eqref{eq::subproblem_uk_reformulation_objective}, we introduce the  lemma as follows.
\begin{lemma} \label{lemma::2}
	For given $\mathbf{u}_k^{(t)}$ in the $t$-th iteration, we have
	\begin{equation} \label{eq::lemma_2} \small
		\begin{aligned}
			 \mathbf{f}_k^{\rm H}(\mathbf{u}_k) \mathbf{Q}_k \mathbf{f}_k(\mathbf{u}_k) &\leq \mathbf{f}_k^{\rm H}(\mathbf{u}_k) \boldsymbol{\Lambda}_k \mathbf{f}_k(\mathbf{u}_k) \\
			&\!\!\!\!\!\!\!\!\!\!\!\! - 2\Re\{\mathbf{f}_k^{\rm H}(\mathbf{u}_k) (\boldsymbol{\Lambda}_k - \mathbf{Q}_k) \mathbf{f}_k(\mathbf{u}_k^{(t)})\} \\
			&\!\!\!\!\!\!\!\!\!\!\!\! + \mathbf{f}_k^{\rm H}(\mathbf{u}_k^{(t)}) (\boldsymbol{\Lambda}_k - \mathbf{Q}_k) \mathbf{f}_k(\mathbf{u}_k^{(t)}) \triangleq \omega_k(\mathbf{u}_k|\mathbf{u}_k^{(t)}),
		\end{aligned}
	\end{equation}
	where $\boldsymbol{\Lambda}_k \triangleq \lambda_{\max,3,k}\mathbf{I}_{L_k^r}$, and $\lambda_{\max,3,k}$ denotes the maximum eigenvalue of $\mathbf{Q}_k$.
\end{lemma}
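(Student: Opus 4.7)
The plan is to recognize that \eqref{eq::lemma_2} is a standard quadratic majorizer of the MM family, and to derive it by applying the elementary inequality $(\mathbf{a}-\mathbf{b})^{\rm H}\mathbf{M}(\mathbf{a}-\mathbf{b}) \geq 0$ with $\mathbf{M} \triangleq \boldsymbol{\Lambda}_k - \mathbf{Q}_k$. The two ingredients I need are (i) Hermiticity and positive semidefiniteness of $\mathbf{M}$, and (ii) a direct algebraic expansion of the inequality.

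First, I would verify that $\mathbf{Q}_k$ is Hermitian and positive semidefinite. Writing $\mathbf{B}_k \triangleq \mathbf{\Sigma}_k\mathbf{G}_k\mathbf{\Phi}\mathbf{H}$, we have $\mathbf{Q}_k = \mathbf{B}_k \overline{\mathbf{W}} \mathbf{B}_k^{\rm H}$, and since $\overline{\mathbf{W}} = \sum_{j\neq k}\mathbf{w}_j\mathbf{w}_j^{\rm H} \succeq \mathbf{0}$, it follows that $\mathbf{Q}_k \succeq \mathbf{0}$ and $\mathbf{Q}_k = \mathbf{Q}_k^{\rm H}$. Consequently, by the definition of the largest eigenvalue $\lambda_{\max,3,k}$, the matrix $\boldsymbol{\Lambda}_k - \mathbf{Q}_k = \lambda_{\max,3,k}\mathbf{I}_{L_k^r} - \mathbf{Q}_k$ is Hermitian and positive semidefinite.

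Second, I would apply the trivial quadratic inequality for any complex vectors $\mathbf{a},\mathbf{b}$:
\begin{equation*}
(\mathbf{a}-\mathbf{b})^{\rm H}(\boldsymbol{\Lambda}_k-\mathbf{Q}_k)(\mathbf{a}-\mathbf{b}) \geq 0,
\end{equation*}
and expand it using Hermiticity of $\boldsymbol{\Lambda}_k-\mathbf{Q}_k$ to obtain
\begin{equation*}
\mathbf{a}^{\rm H}(\boldsymbol{\Lambda}_k-\mathbf{Q}_k)\mathbf{a} \geq 2\Re\{\mathbf{a}^{\rm H}(\boldsymbol{\Lambda}_k-\mathbf{Q}_k)\mathbf{b}\} - \mathbf{b}^{\rm H}(\boldsymbol{\Lambda}_k-\mathbf{Q}_k)\mathbf{b}.
\end{equation*}
Substituting $\mathbf{a} = \mathbf{f}_k(\mathbf{u}_k)$ and $\mathbf{b} = \mathbf{f}_k(\mathbf{u}_k^{(t)})$, then isolating $\mathbf{f}_k^{\rm H}(\mathbf{u}_k)\mathbf{Q}_k\mathbf{f}_k(\mathbf{u}_k)$ on one side, directly yields the claimed upper bound $\omega_k(\mathbf{u}_k|\mathbf{u}_k^{(t)})$. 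Tightness at $\mathbf{u}_k=\mathbf{u}_k^{(t)}$ as well as matching gradients at that point (the two conditions required for a valid MM surrogate) are then immediate, since $\mathbf{a}=\mathbf{b}$ makes the original quadratic form vanish with equality.

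There is no real analytical obstacle here; the proof is a textbook MM construction. The only items demanding care are bookkeeping ones: confirming Hermiticity of $\boldsymbol{\Lambda}_k - \mathbf{Q}_k$ before invoking the $\Re\{\cdot\}$ notation for the cross term, and ensuring the signs match when rearranging so that the upper bound's linear and constant terms appear with exactly the coefficients shown in \eqref{eq::lemma_2}.
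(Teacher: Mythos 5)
Your proof is correct and follows essentially the same route as the paper: the paper writes the key nonnegativity as $\|(\boldsymbol{\Lambda}_k-\mathbf{Q}_k)^{1/2}(\mathbf{f}_k(\mathbf{u}_k)-\mathbf{f}_k(\mathbf{u}_k^{(t)}))\|^2 \geq 0$, which is exactly your $(\mathbf{a}-\mathbf{b})^{\rm H}(\boldsymbol{\Lambda}_k-\mathbf{Q}_k)(\mathbf{a}-\mathbf{b})\geq 0$, and then expands and isolates $\mathbf{f}_k^{\rm H}(\mathbf{u}_k)\mathbf{Q}_k\mathbf{f}_k(\mathbf{u}_k)$ just as you do. Your explicit check that $\mathbf{Q}_k=\mathbf{B}_k\overline{\mathbf{W}}\mathbf{B}_k^{\rm H}\succeq\mathbf{0}$ (so that $\boldsymbol{\Lambda}_k-\mathbf{Q}_k$ is Hermitian PSD) is a welcome bookkeeping detail that the paper leaves implicit.
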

\begin{proof}
	Refer to Appendix \ref{app0}.
\end{proof}

In light of Lemma \ref{lemma::2}, the surrogate function for \eqref{eq::subproblem_uk_reformulation_objective} can be readily constructed as
\begin{equation} \label{eq::surrogate_function_uk} \small
	\psi_k(\mathbf{u}_k|\mathbf{u}_k^{(t)}) = \omega_k(\mathbf{u}_k|\mathbf{u}_k^{(t)}) + 2\Re\left\{\mathbf{f}_k^{\rm H}(\mathbf{u}_k)\mathbf{p}_k\right\}.
\end{equation}
In addition, according to $\mathbf{f}_k^{\rm H}(\mathbf{u}_k) \mathbf{f}_k(\mathbf{u}_k) = L^k_r$, we have $\mathbf{f}_k^{\rm H}(\mathbf{u}_k) \boldsymbol{\Lambda}_k \mathbf{f}_k(\mathbf{u}_k) = \lambda_{\max,3,k}L^k_r$, which is a constant. Then, $\psi_k(\mathbf{u}_k|\mathbf{u}_k^{(t)})$ can be recast as
\begin{equation} \label{eq::surrogate_function_uk_2} \small
	\psi_k(\mathbf{u}_k|\mathbf{u}_k^{(t)}) = 2\Re\left\{\mathbf{f}_k^{\rm H}(\mathbf{u}_k)\boldsymbol{\varsigma}_k^{(t)}\right\} + \varepsilon_5,
\end{equation}
where $\boldsymbol{\varsigma}_k^{(t)} = \mathbf{p}_k - (\boldsymbol{\Lambda}_k - \mathbf{Q}_k) \mathbf{f}_k(\mathbf{u}_k^{(t)})$, and $\varepsilon_5 = L_k^r + \mathbf{f}_k^{\rm H}(\mathbf{u}_k^{(t)}) (\boldsymbol{\Lambda}_k - \mathbf{Q}_k) \mathbf{f}_k(\mathbf{u}_k^{(t)})$ is constant w.r.t. $\mathbf{u}_k$. However,  $\psi_k(\mathbf{u}_k|\mathbf{u}_k^{(t)})$ is still non-concave w.r.t. $\mathbf{u}_k$.

Next, the second-order Taylor expansion of $\psi_k(\mathbf{u}_k|\mathbf{u}_k^{(t)}) \triangleq \hat{\psi}_k(\mathbf{u}_k)$ is employed to construct the surrogate function. Specifically, we introduce $\delta_k > 0$ satisfying $\delta_k\mathbf{I}_2 \succeq \nabla^2\hat{\psi}_k(\mathbf{u}_k)$ \cite{magnus1995matrix}, and thus we obtain
\begin{equation} \label{eq::surrogate_function_uk_twice} \small
	\begin{aligned}
		\hat{\psi}_k(\mathbf{u}_k) &\leq \hat{\psi}_k(\mathbf{u}_k^{(t)}) + \nabla\hat{\psi}_k(\mathbf{u}_k^{(t)})^{\rm T}(\mathbf{u}_k-\mathbf{u}_k^{(t)}) \\
		& \quad + \frac{\delta_k}{2}(\mathbf{u}_k-\mathbf{u}_k^{(t)})^{\rm T}(\mathbf{u}_k-\mathbf{u}_k^{(t)}) \\
		& = \frac{\delta_k}{2}\mathbf{u}_k^{\rm T}\mathbf{u}_k + (\nabla\hat{\psi}_k(\mathbf{u}_k^{(t)}) - \delta_k\mathbf{u}_k^{(t)})^{\rm T}\mathbf{u}_k \\
		& \quad + \hat{\psi}_k(\mathbf{u}_k^{(t)}) + \frac{\delta_k}{2}(\mathbf{u}_k^{(t)})^{\rm T}\mathbf{u}_k^{(t)} - \nabla\hat{\psi}_k(\mathbf{u}_k^{(t)})^{\rm T}\mathbf{u}_k^{(t)},
	\end{aligned}
\end{equation}
where $\nabla\hat{\psi}_k(\mathbf{u}_k)$ and  $\nabla^2\hat{\psi}_k(\mathbf{u}_k)$ are provided in Appendix \ref{app1}, and Appendix \ref{app2} provides the selection method of $\delta_k$.

Finally, by utilizing the surrogate function as defined in \eqref{eq::surrogate_function_uk_twice} and dropping the constant terms w.r.t. $\mathbf{u}_k$, we can transformed problem \eqref{eq::subproblem_uk_reformulation} into

\vspace{-1 em}
\begin{subequations} \small \label{eq::subproblem_uk_reformulation_with_convex}
	\begin{align}
		\min_{\mathbf{u}_k}& \quad \frac{\delta_k}{2}\mathbf{u}_k^{\rm T}\mathbf{u}_k + (\nabla\hat{\psi}_k(\mathbf{u}_k^{(t)}) - \delta_k\mathbf{u}_k^{(t)})^{\rm T}\mathbf{u}_k \label{eq::subproblem_uk_reformulation_with_convex_objecitve}\\
		{\rm s.t.}&~ \quad  \eqref{eq::subproblem_uk_reformulation_b}. \nonumber
	\end{align}
\end{subequations}
Since \eqref{eq::subproblem_uk_reformulation_with_convex_objecitve} is a quadratic convex function w.r.t. $\mathbf{u}_k$ and the antenna moving region $\mathcal{C}_k$ is linear, \eqref{eq::subproblem_uk_reformulation_with_convex} is a quadratic programming problem. This indicates that we can obtain a local optimal solution using various well-developed algorithms or toolboxes \cite{boyd2004convex}. Based on the above discussions, the detailed steps for solving \eqref{eq::subproblem_uk_reformulation} are summarized in \textbf{Algorithm}~\ref{alg1}. 
%When the algorithm converges, we can obtain the optimal MA position for user $k$.

\subsection{Outer Layer Optimization}
When the inner-layer optimization converges, the penalty factor in the $T$-th iteration of the outer-layer can be updated as
\begin{equation} \label{eq::penalty_factor_update} \small
	\rho^T = \eta\rho^{T-1}, 0 < \eta < 1,
\end{equation}
where $\eta$ is a constant. Besides, it is noted that increasing $\eta$ can improve performance, but it requires a greater number of iterations.

\subsection{Overall Solution}\label{SecC}
We summarize the detailed procedures of the proposed overall solution in \textbf{Algorithm}~\ref{alg2}. Specifically, in steps 4-8, we first optimize the auxiliary variable set $\Omega$ with other variables fixed. Then, in step 9, we obtain the receive filter $\mathbf{r}_B$ in closed form. The transmit beamformers $\mathbf{W}$ are then determined by addressing the convex problem \eqref{eq::subproblem_W} in step 10. Next, we optimize the RIS coefficient matrix sequentially by solving problem \eqref{eq::subproblem_Phi_transformation} based on the MM algorithm in step 11. In steps 12-14, the positions of MAs for $K$ users are optimized by successively solving the problem \eqref{eq::subproblem_uk_reformulation_with_convex}. The inner-layer iterations alternately address the aforementioned five subproblems until the objective value in \eqref{eq::ReFormulation_2} increases by less than $\epsilon_{\rm in}$ or the number of inner-layer iterations exceeds $T_1^{\rm max}$. Meanwhile,  the outer-layer iterations adjust the penalty factor and will terminate if the iteration count exceeds $T_2^{\rm max}$ or the following condition is satisfied
\begin{equation} \label{eq::termination_indicator} \small
	\max_{\forall i,k} \left\{\left|\mathbf{g}_0^{\rm H}(\mathbf{\Phi})\mathbf{w}_{j} - x_{j}\right|^2, \left|\mathbf{g}^{\rm H}_k(\mathbf{\Phi},\mathbf{u}_k)\mathbf{w}_j - z_{k,j}\right|^2 \right\} \leq \epsilon_{\rm out},
\end{equation}
indicating that  \eqref{eq::ReFormulation_e} is met with equality for a given accuracy.

\begin{algorithm}[t] \small
	\caption{MM Algorithm for Solving \eqref{eq::subproblem_uk_reformulation}}
	\label{alg1}
	\begin{algorithmic}[1]
		\STATE \textbf{Initialize} convergence threshold $\epsilon_1$ and iteration index $t = 1$.
		\REPEAT
		\STATE Calculate $\boldsymbol{\Lambda}_k$ according to Lemma \ref{lemma::2}.
		\STATE Calculate $\boldsymbol{\varsigma}_k^{(t)}$.
		\STATE Calculate $\nabla\hat{\psi}_k(\mathbf{u}_k^{(t)})$, $\nabla^2\hat{\psi}_k(\mathbf{u}_k^{(t)})$, and $\delta_k$ via \eqref{eq::first_order}, \eqref{eq::second_order}, and \eqref{eq::delta}, respectively.
		\STATE Obtain $\mathbf{u}_k^{(t+1)}$ by solving problem  \eqref{eq::subproblem_uk_reformulation_with_convex}.
		\STATE Set $t$ $\leftarrow$ $t+1$.
		\UNTIL Increase of the value in \eqref{eq::subproblem_uk_reformulation_objective} is below $\epsilon_1$.
		\RETURN $\mathbf{u}_k$.
	\end{algorithmic}
\end{algorithm}

\begin{algorithm}[t] \small
	\caption{Penalty-Based Algorithm for Solving \eqref{eq::originalFormulation}}
	\label{alg2}
	\begin{algorithmic}[1]
		\STATE  \textbf{Initialize}  $\boldsymbol{\lambda}^{(T_1)}$, $\boldsymbol{\iota}^{(T_1)}$, $\rho^{(T_2)}$, $\{{z_{k,j}^{(T_1)},x_j^{(T_1)}}\}$, $\mathbf{r}_B^{(T_1)}$, $\mathbf{W}^{(T_1)}$, $\mathbf{\Phi}^{(T_1)}$, $\{\mathbf{u}_k^{(T_1)}\}_{k=1}^{K}$,  $\epsilon_{\rm in}$, $\epsilon_{\rm out}$,  inner-layer iteration index $T_1 = 0$, outer-layer iteration index $T_2 = 0$, $T_1^{\rm max}$, and $T_2^{\rm max}$.
		\REPEAT
			\REPEAT
			\STATE Update $\boldsymbol{\lambda}^{(T_1+1)}$ and $\boldsymbol{\iota}^{(T_1+1)}$ via \eqref{eq::optimal_lambda} and \eqref{eq::optimal_c}.
			\FOR{$k=1$ to $K$}  
			\STATE Update  $\{z_{k,j}^{(T_1+1)}, \forall j\}$  by  solving  problem \eqref{eq::subproblem_Omega_z}.
			\ENDFOR
			\STATE Update $\{x_j^{(T_1+1)}, \forall j\}$  by  solving problem \eqref{eq::subproblem_Omega_x}.
			\STATE Update receive filter $\mathbf{r}_B^{(T_1+1)}$ via \eqref{eq::optimal_r}.
			\STATE Update $\mathbf{W}^{(T_1+1)}$ by addressing problem \eqref{eq::subproblem_W}.
			\STATE Update $\mathbf{\Phi}^{(T_1+1)}$ by addressing problem \eqref{eq::subproblem_Phi_transformation}.
			\FOR{$k=1$ to $K$}  
			\STATE Update $\mathbf{u}_k^{(T_1+1)}$ via Algorithm \ref{alg1}.
			\ENDFOR 
			\STATE Set $T_1 \leftarrow T_1 + 1$.
			\UNTIL the fractional increase of the objective value in \eqref{eq::ReFormulation_2} is below the threshold $\epsilon_{\rm in}$ or $T_1 > T_1^{\rm max}$.
		\STATE Update penalty factor $\rho^{(T_2+1)}$ according to \eqref{eq::penalty_factor_update}.
		\STATE Set $T_2 \leftarrow T_2 + 1$ and $T_1 \leftarrow 0$.
		\UNTIL the condition in \eqref{eq::termination_indicator} is satisfied or $T_2 > T_2^{\rm max}$.
	\end{algorithmic}
\end{algorithm}

\subsubsection{Convergence Analysis}
In \textbf{Algorithm}~\ref{alg2}, each subproblem in the inner layer is solved locally and/or optimally, ensuring \eqref{eq::ReFormulation_2_objective} is non-decreasing over iterations. Moreover, since  \eqref{eq::ReFormulation_2_objective} is upper-bounded by the finite transmit power budget, following \cite{hua2023secure}, the solution derived by \textbf{Algorithm}~\ref{alg2} is ensured to converge.
% to a stationary point.

\subsubsection{Computational Complexity}
Note that the interior point method is employed to address the subproblems in \eqref{eq::subproblem_W}, \eqref{eq::subproblem_Phi_transformation}, and \eqref{eq::subproblem_uk_reformulation_with_convex}.
Then, we analyze the complexity of \textbf{Algorithm}~\ref{alg2} as follows. Specifically, in steps 6 and 8, bisection methods are employed, whose computation complexities are obtained as $\mathcal{O}\left(K\log_2(\frac{1}{\epsilon})M^3\right)$and $\mathcal{O}\left(\log_2(\frac{1}{\epsilon})M^3\right)$, respectively, where $\epsilon$ represents the iteration accuracy.
%In step x, the closed-form expression for the receive beamformer vector $\mathbf{r}_B$ is obtained, whose complexity is $\mathcal{O}(1)$.
In step 10, the complexity for obtaining 
$\mathbf{W}$ is  $\mathcal{O}\left(M^{3.5}(M+K)^{3.5}\right)$.
In step 11, the complexity of optimizing  $\mathbf{\Phi}$  is $\mathcal{O}\left(N^{3.5}\right)$.
In step 13, the complexity for obtaining the maximum eigenvalues of $\mathbf{Q}_k$ is given by $\mathcal{O}\left((L^r_k)^3\right)$. The complexities for calculating $\nabla\hat{\psi}_k(\mathbf{u}_k)$, $\nabla^2\hat{\psi}_k(\mathbf{u}_k)$, and $\delta_k$ are given by $\mathcal{O}\left(L^r_k\right)$, $\mathcal{O}\left(L^r_k\right)$, and $\mathcal{O}(1)$, respectively. Besides, updating $\mathbf{u}_k^{(t+1)}$ via solving problem \eqref{eq::subproblem_uk_reformulation_with_convex}, incurs a complexity of $\mathcal{O}\left(K(2^{3.5})\right)$, where 2 represents the number of variables. As such, the total complexity for obtaining $\{\mathbf{u}_k\}_{k=1}^{K}$ is given by $\mathcal{O}\left((L^r_k)^3 + I_{\max}K(2^{3.5})\right)$, where $I_{\max}$ denotes the maximum iterations to solve problem \eqref{eq::subproblem_uk_reformulation_with_convex}.
Hence,  we can readily obtain the total complexity for \textbf{Algorithm}~\ref{alg2}, denoted as $\mathcal{O}\Big(T_1^{\max}(T_2^{\max}(K\log_2(\frac{1}{\epsilon})M^3 +  \log_2(\frac{1}{\epsilon})M^3 + M^{3.5}(M+K)^{3.5} + N^{3.5} + (L^r_k)^3 + I_{\max}K2^{3.5}))\Big)$.
%, where $t_{\max}$ and $T_{\max}$ denote the number of iterations required for convergence in the inner layer and outer layer, respectively.
%In practice, the RIS always contains significantly more elements compared to the number of users and BS antennas, i.e., $N \gg \max\{M, K\}$. As such, the computational complexity of the overall algorithm can be well-approximated by $\mathcal{O}(T_1^{\max}T_2^{\max}N^{3.5})$.
In practice, since we always have $N \gg \max\{M, K\}$, the complexity of the overall algorithm can be well-approximated by $\mathcal{O}(T_1^{\max}T_2^{\max}N^{3.5})$.

\section{Simulation Results}
This section presents simulation results that validate the performance of securing transmission for MAs-aided RIS-ISAC systems. 
%\subsection{Simulation Setup}
%Each user is equipped with a single MA, 
The eavesdropping target has a single FPA-based UPA, and the ISAC BS and RIS are equipped with $M=6$ and $N=64$ FPA-based UPAs, respectively. The locations of the RIS and ISAC BS are set to $(0, 20, 3)$ meters (m) and $(0, 0, 3)$ m, respectively. The users are distributed randomly and uniformly around a circle with a radius of $4$ m, centered at coordinates $(5, 20, 0)$ m. The elevation and azimuth AoAs (AoDs) for users/target (RIS) are considered to be independent identically distributed (i.i.d.) random variables that follow a uniform distribution, i.e., $\vartheta_{\kappa,i}^r$, $\varphi_{\kappa,i}^r$, $\vartheta_{\kappa,j}^t$, $\varphi_{\kappa,j}^t$  $\sim \mathcal{U}[0,\pi]$, $1 \leq i \leq L_{\kappa}^r$, $1 \leq j \leq L_{\kappa}^t$. Similarly, we assume $\vartheta_{s,i}^r$, $\varphi_{s,i}^r$, $\vartheta_{s,j}^t$, $\varphi_{s,j}^t$ $\sim \mathcal{U}[0,\pi]$, $1 \leq i \leq L_{bs}^r$, $1 \leq j \leq L_{bs}^t$.
%$\vartheta_{\kappa,i}^r \sim \mathcal{U}(0,\pi)$, $\varphi_{\kappa,i}^r \sim \mathcal{U}(0,\pi)$, $1 \leq i \leq L_{\kappa}^r$, as well as $\vartheta_{\kappa,j}^t \sim \mathcal{U}(0,\pi)$, $\varphi_{\kappa,j}^t \sim \mathcal{U}(0,\pi)$, $1 \leq j \leq L_{\kappa}^t$, respectively. 
%Similarly, we have $\vartheta_{s,i}^r \sim \mathcal{U}(0,\pi)$, $\varphi_{s,i}^r \sim \mathcal{U}(0,\pi)$, $1 \leq i \leq L_{bs}^r$ and $\vartheta_{s,j}^t \sim \mathcal{U}(0,\pi)$, $\varphi_{s,j}^t \sim \mathcal{U}(0,\pi)$, $1 \leq j \leq L_{bs}^t$.
%Besides, the elevation and azimuth AoAs of the RIS, as well as the elevation and azimuth AoDs of the BS, are also considered to be i.i.d. variables that follow a uniform distribution, i.e., $\vartheta_{s,i}^r \sim \mathcal{U}[0,\pi]$, $\varphi_{s,i}^r \sim \mathcal{U}[0,\pi]$, $1 \leq i \leq L_{bs}^r$ and $\vartheta_{s,j}^t \sim \mathcal{U}[0,\pi]$, $\varphi_{s,j}^t \sim \mathcal{U}[0,\pi]$, $1 \leq j \leq L_{bs}^t$, respectively. 
Moreover, all the channels are described by the geometric channel model \cite{zhu2023modeling}, assuming an equal number of transmit and receive paths, i.e., $L_0^r=L_0^t=1$ and $L_k^r=L_k^t=L_{bs}^r=L_{bs}^t=L_p$, where $L_p=6$. As such, the PRM $\Sigma_{\kappa} \in \mathbb{C}^{L_{\kappa}^r \times L_{\kappa}^t}$ and $\Sigma_{bs} \in \mathbb{C}^{L_{bs}^r \times L_{bs}^t}$ are denoted as diagonal matrices with each element $\sigma_{ll} \sim \mathcal{CN}(0, \frac{g_0d^{-\alpha}}{L_p})$, $1 \leq l \leq L_p$, where $g_0 = -40$ dB and $\alpha = 2.8$, respectively. For simplicity, equal noise variances are assumed across all communication and radar equipment, i.e., $\sigma^2_k = \sigma^2_r =\sigma^2_e = \sigma^2, 1 \leq k \leq K$. The receive region for the MA is modeled as a 2D square, i.e., $\mathcal{C}_k = \mathcal{C} = [-\frac{A}{2},\frac{A}{2}] \times [-\frac{A}{2},\frac{A}{2}]$ for $\forall k$, where $A = 4\lambda$ and $\lambda = 0.01$ m denotes the wavelength. For all the users, the maximum tolerable secrecy leakage SINR and the minimum communication SINR are considered to be identical, i.e., $\Gamma_{e,k}=\Gamma_e = 0$ dB, and $\Gamma_k = \Gamma=10$ dB, $1 \leq k \leq K$. 
Unless otherwise stated, the following simulation parameters are adopted. Specifically, $K=3$, $P_B = 32$ dBm, $\Gamma_r=0$ dB, $\sigma = -90$ dBm, $\sigma_t = 1$, $L=1024$, $\rho = 0.1$, $\eta = 0.85$, $\epsilon_1=\epsilon_{\rm in} = 10^{-7}$, $\epsilon_{\rm out} = 10^{-5}$, $T_1^{\rm max} = 400$, and $T_2^{\rm max} = 120$. We provide simulation results obtained by averaging over 500 independent user distributions and channel realizations.

\subsection{Convergence Performance}
\begin{figure}[t]
	\begin{center}
		\includegraphics[width= 2.8 in]{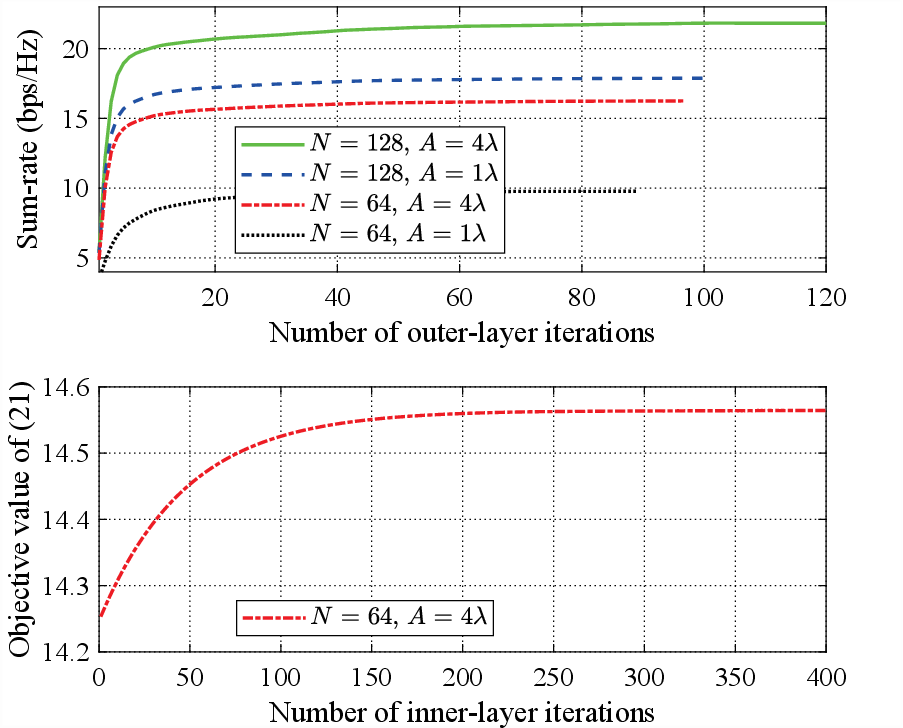}
		\caption{The convergence behavior of outer and inner layer in Algorithm~\ref{alg2}.} \label{fig::convergence_outer}
	\end{center}
	\vspace{-1.5em}
\end{figure}
In Fig. \ref{fig::convergence_outer}, we illustrate the convergence performance of the outer layer and inner layer in \textbf{Algorithm}~\ref{alg2}. It is evident that, regardless of the size of the receive region at users or the number of reflection elements at the RIS, the sum-rate consistently increases and stabilizes after approximately 120 iterations. Specifically, with $N = 128$ and $A = 4\lambda$, the sum-rate improves from $5.44$ bps/Hz to $21.83$ bps/Hz, highlighting the proposed solution's effectiveness in enhancing the PLS of the MAs-aided system under consideration. Additionally, the objective function values for \eqref{eq::ReFormulation_2} converge after 240 iterations, aligning with earlier discussions.

\subsection{Benchmark Schemes and Performance Comparison}
To comprehensively verify the superiority of the proposed scheme (denoted as ``\textbf{Proposed}''), we compared it against several baseline schemes as outlined below.
\begin{itemize}
	\item \textbf{FPA:} Each user's antenna is positioned at the origin within their respective local coordinate systems.
	\item \textbf{Random position antenna (RPA):} The antenna of each user is randomly distributed in its receive region $\mathcal{C}_k, \forall k$.
	\item \textbf{Separate:} The receive filter $\mathbf{r}_B$, communication beamformer $\mathbf{W}_c$, radar beamformer $\mathbf{W}_r$, reflection coefficient matrix $\mathbf{\Phi}$, and MA positions of users $\{\mathbf{u}_k\}_{k=1}^K$ are separately optimized. Specifically, $\mathbf{W}_c$ is optimized by maximizing the sum-rate within a power budget constraint, $\mathbf{W}_r$ is optimized by minimizing the power while ensuring a predefined radar SNR threshold is met, and the optimization of $\mathbf{\Phi}$ involves tackling the sum channel gain maximization problem. Then, $\{\mathbf{u}_k\}_{k=1}^K$ is optimized by maximizing channel power \cite{zhu2023movable}, where each user's MA is positioned to maximize its channel, i.e., $\mathbf{u}_k^* = \arg\max_{\mathbf{u}_k \in \mathcal{C}_k} \|\mathbf{h}_k(\mathbf{u}_k)\|_2^2, 1 \leq k \leq K$. Finally, $\mathbf{r}_B$ is obtained by \eqref{eq::optimal_r}.
	\item \textbf{Communication signal only (Comm only):} The proposed algorithm optimizes multi-user communication by neglecting radar sensing constraints, thereby providing an upper bound on the sum-rate performance for communication users.
	\item \textbf{Random phase:} The RIS phase shifts are generated randomly, following a uniform distribution within the range $[0, 2\pi)$.
\end{itemize}

% 1*3 构图
\begin{figure*}[ht]
	\begin{center}
		\begin{minipage}[b]{0.3\linewidth}
			\centering
			\includegraphics[width= 2.45 in]{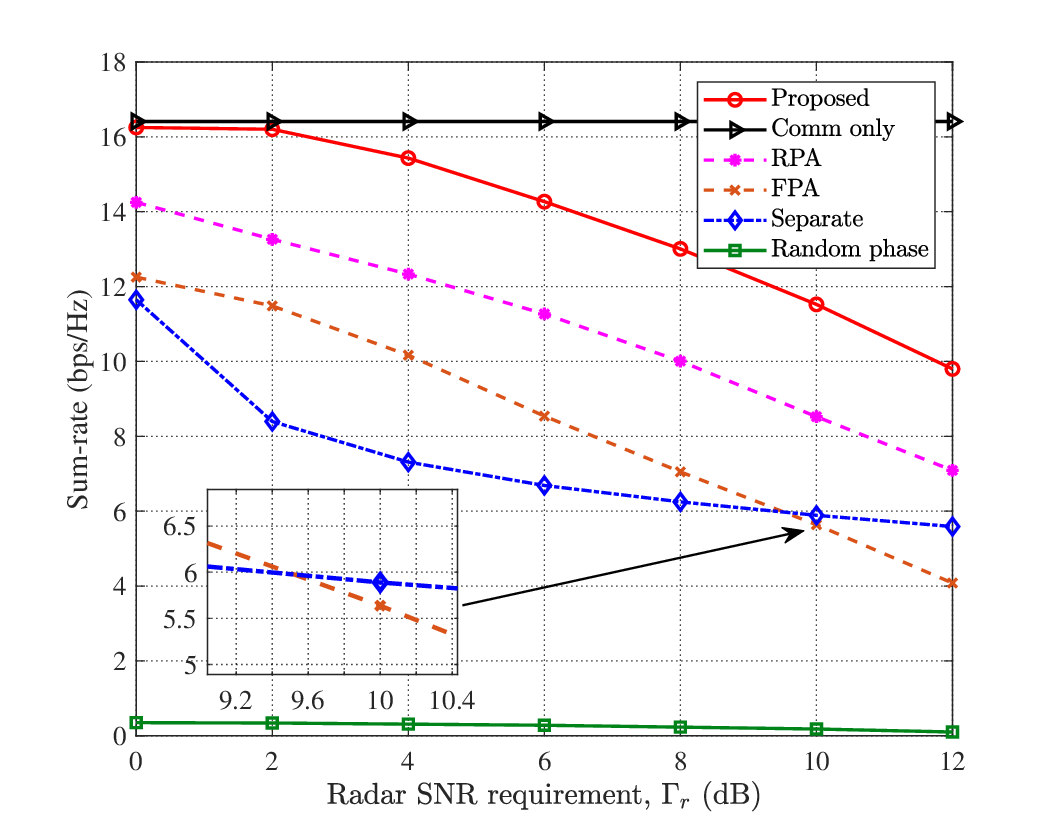}
			\caption{Sum-rate versus the radar SNR threshold $\Gamma_r$.} \label{fig::SR_vs_Gammar}
		\end{minipage}
		\quad 
		\begin{minipage}[b]{0.3\linewidth}
			\centering
			\includegraphics[width= 2.4 in]{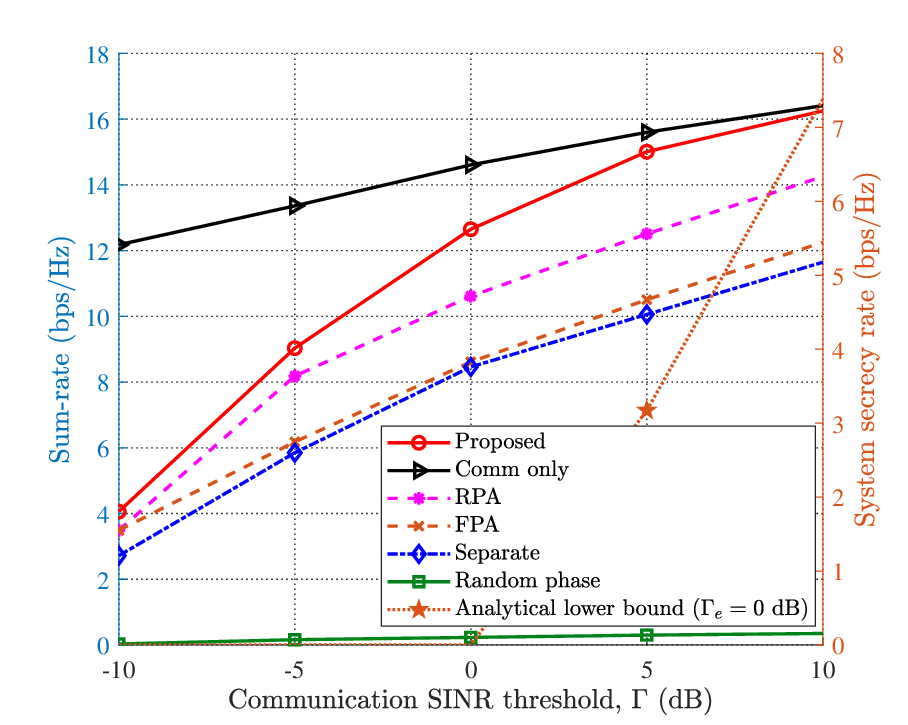}
			\caption{Sum-rate versus the communication SINR threshold $\Gamma$.} \label{fig::SR_vs_Gamma}
		\end{minipage}
		\quad 
		\begin{minipage}[b]{0.3\linewidth}
			\centering
			\includegraphics[width= 2.25 in]{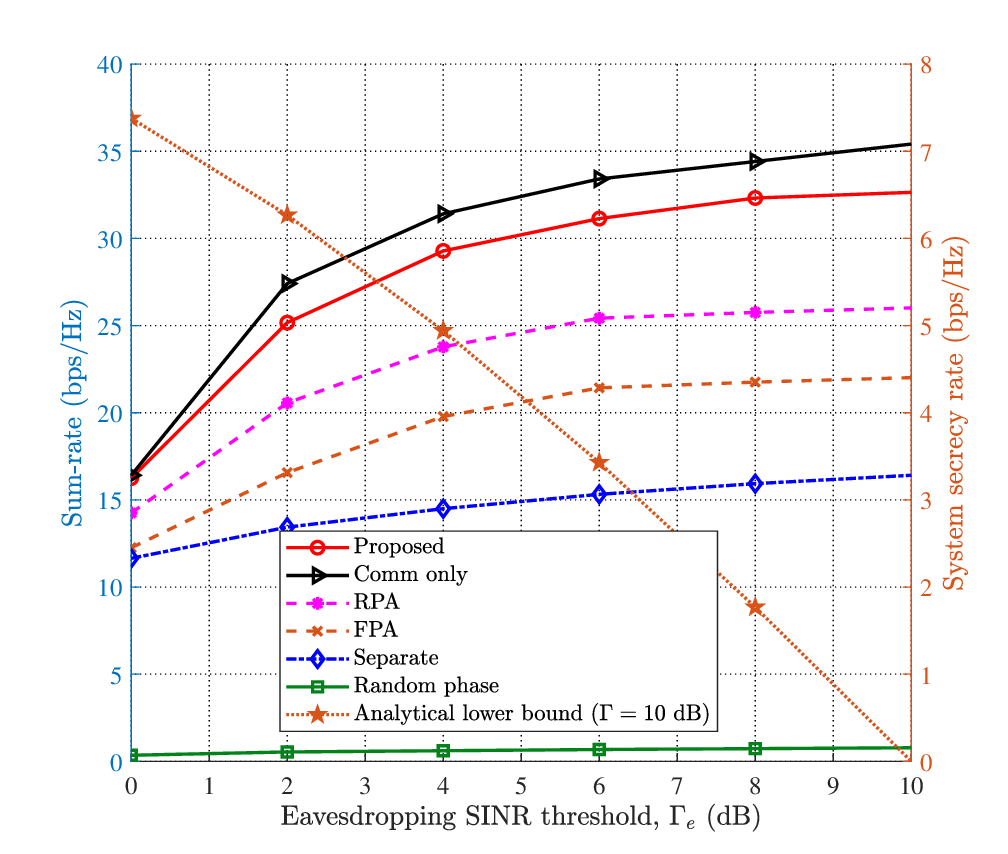}
			\caption{Sum-rate versus the eavesdropping SINR threshold $\Gamma_e$.} \label{fig::SR_vs_Gammae}
		\end{minipage}
	\end{center}
	\vspace{-1.5em}
\end{figure*}

% requirements -> threshold
%\begin{figure}[t]
%	\begin{center}
%		\includegraphics[width= 2.45 in]{figures//SR_vs_Gammar.eps}
%		\caption{Sum-rate versus the radar SNR threshold $\Gamma_r$.} \label{fig::SR_vs_Gammar}
%	\end{center}
%	\vspace{-1.5em}
%\end{figure}
Fig.~\ref{fig::SR_vs_Gammar} compares the sum-rate of the baseline schemes versus radar sensing requirements. It is depicted that the sum-rate for all approaches decreases monotonically with the increase in $\Gamma_r$, except for the ``Comm only'' approach. This is due to the fact that a more stringent target detection constraint necessitates greater power for sensing beamformers, thereby reducing the available power to optimize $\mathbf{W}_c$ for maximizing the sum-rate. Moreover, it is illustrated that the proposed approach surpasses the ``Separate'' approach, implying the sum-rate gain can be achieved through joint optimization approaches. Besides, the ``Random phase'' scheme experiences significant performance deterioration, indicating the performance gain provided by the RIS through channel reconstruction. Finally, the proposed MA scheme outperforms both the  ``RPA'' and ``FPA'' schemes, highlighting the benefits of optimizing MA positions. Consequently, in the considered RIS-ISAC system, implementing MA proves to be highly advantageous for enhancing PLS.

%\begin{figure}[t]
%	\begin{center}
%		\includegraphics[width= 2.4 in]{figures//SR_vs_Gamma_v2.eps}
%		\caption{Sum-rate versus the communication SINR threshold $\Gamma$.} \label{fig::SR_vs_Gamma}
%	\end{center}
%	\vspace{-1.5em}
%\end{figure}
%\begin{figure}[t]
%	\begin{center}
%		\includegraphics[width= 2.4 in]{figures//SR_vs_Gammae_v2.eps}
%		\caption{Sum-rate versus the eavesdropping SINR threshold $\Gamma_e$.} \label{fig::SR_vs_Gammae}
%	\end{center}
%	\vspace{-1.5em}
%\end{figure}
Figs.~\ref{fig::SR_vs_Gamma}-\ref{fig::SR_vs_Gammae} show the sum-rate of various schemes versus  $\Gamma$ and $\Gamma_e$, respectively. 
The results reveal that the proposed scheme consistently surpasses all baseline approaches in sum-rate across different $\Gamma$ and $\Gamma_e$. This indicates that the proposed scheme can achieve a higher secrecy rate lower bound for fixed $\Gamma$ and/or $\Gamma_e$, thereby significantly boosting system security.
In addition, we plot the analytical lower bound of the secrecy rate as defined in \eqref{eq::PLS_level}, from which the sum-rate and the lower bound on the secrecy rate under specific secrecy requirements for all users can be obtained. For example, with $\Gamma=10$ dB and $\Gamma_e=0$ dB, the proposed scheme achieves a sum-rate of $16.25$ bps/Hz, while the system lower bound for the secrecy rate for all users is $7.38$ bps/Hz. Moreover, in Fig.~\ref{fig::SR_vs_Gamma}, it can be observed that a more stringent communication SINR requirement leads to a higher achievable sum-rate. As $\Gamma$ increases, the performance gap between the ``Comm only'' baseline and the proposed approach narrows, as nearly all the transmit power is utilized to meet the SINR requirements. In Fig.~\ref{fig::SR_vs_Gammae}, it is evident that a less stringent  requirement on secrecy leakage to the target (i.e., $\Gamma_e$ increases), the achievable sum-rate increases initially and then remains relatively unchanged. This is due to the fact that more power can be allocated to information signals with a relaxed secrecy requirement. However, with a limited transmit power budget and radar SNR constraints, the transmit power allocated to the communication information is bounded. In addition, the performance improvement from the ``Random phase'' approach shows only a slight increase with higher $\Gamma$ or $\Gamma_e$, which is attributed to the limited DoFs.

\begin{figure}[t]
	\begin{center}
		\includegraphics[width= 2.7 in]{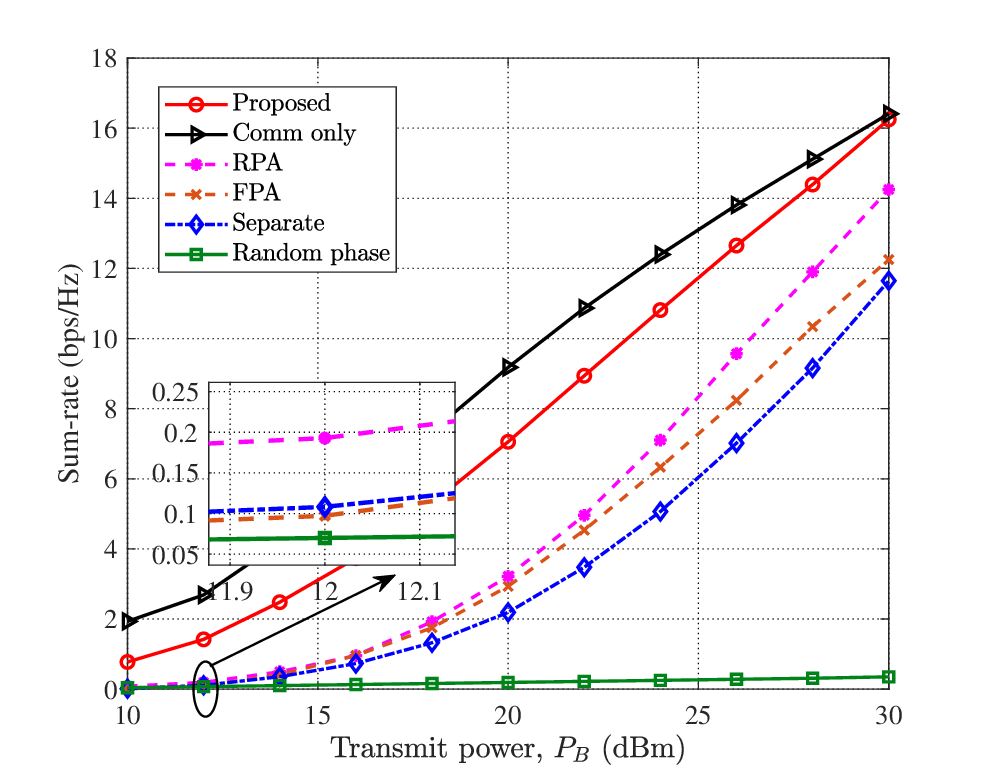}
		\caption{Sum-rate versus transmit power budget $P_B$.} \label{fig::SR_vs_P}
	\end{center}
	\vspace{-1.5em}
\end{figure}
Fig.~\ref{fig::SR_vs_P} compares the achievable sum-rate for various baseline approaches versus $P_B$. The results show that the sum-rate for all schemes increases monotonically with $P_B$. Furthermore, the performance gap between the proposed approach and ``Comm only''  method narrows as $P_B$ increases, since a greater portion of the transmit power is dedicated to improving the performance for communication users while still meeting a fixed radar SNR requirement. Moreover, the sum-rate achieved by the ``Random phase'' approach increases only marginally with increasing $P_B$, since the signals reflected by the RIS propagate in random directions, resulting in low received power levels. In addition, the proposed scheme achieves substantial sum-rate gains compared to the  ``Separate'', ``RPA'', ``FPA'', and ``Random phase'' approaches, illustrating the advantages of jointly designing the transmit/receive beamformers, RIS phase shifts, and positions of MAs. This results in a higher lower bound on the secrecy rate, thereby enhancing the security performance of the considered system.

\begin{figure}[t]
	\begin{center}
		\includegraphics[width= 2.7 in]{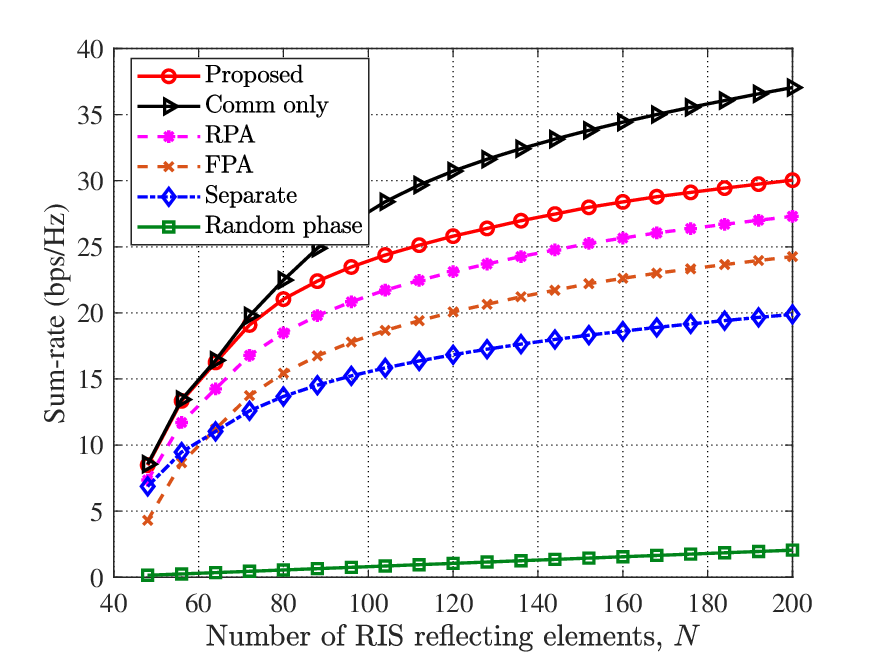}
		\caption{Sum-rate versus number of RIS elements $N$.} \label{fig::SR_vs_N}
	\end{center}
	\vspace{-1.5em}
\end{figure}
Fig.~\ref{fig::SR_vs_N} illustrates the sum-rate comparison for various approaches against the number of RIS reflecting elements $N$. As can be observed, the achievable sum-rate for all baseline schemes increases as $N$ becomes larger due to the increased DoFs available to manipulate the propagation environment. Moreover, the proposed MA-assisted scheme always outperforms the benchmark schemes within the considered range of reflecting elements. For example, when $N$ equals $200$, the maximum sum-rate achieved by the ``FPA'' scheme is $24.26$ bps/Hz, while that achieved by the proposed scheme is approximately $30.04$ bps/Hz, representing an improvement of about $23.8\%$.

Fig.~\ref{fig::SR_vs_A} depicts the sum-rate for different schemes against the normalized size of the receive region $A/\lambda$. The sum-rate increases for the ``Proposed'', ``Comm only'', and ``Separate'' schemes as $A/\lambda$ increases. This is because a larger receive region offers more DoFs for position optimization of the MAs, allowing them to move to locations with improved channel conditions. Nevertheless, it should be noted that the performance enhancement due to the expanded receive region is constrained, and the sum-rate of the MA scheme remains relatively constant as $A/\lambda$ exceeds $3.5$. The results also shows a significant sum-rate improvement for the proposed MAs-aided scheme over other baseline schemes, indicating its ability to achieve a higher secrecy rate lower bound for fixed $\Gamma$ and/or $\Gamma_e$, thus enhancing the security.

\begin{figure}[t]
	\begin{center}
		\includegraphics[width= 2.7 in]{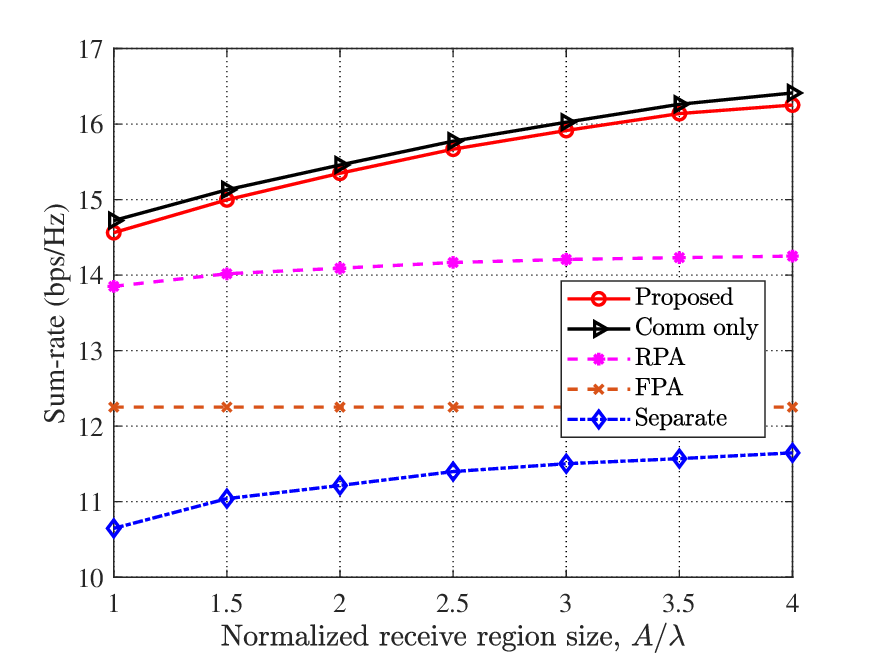}
		\caption{Sum-rate versus normalized receive region size.} \label{fig::SR_vs_A}
	\end{center}
	\vspace{-1.5em}
\end{figure}

\begin{figure}[t]
	\begin{center}
		\includegraphics[width= 2.7 in]{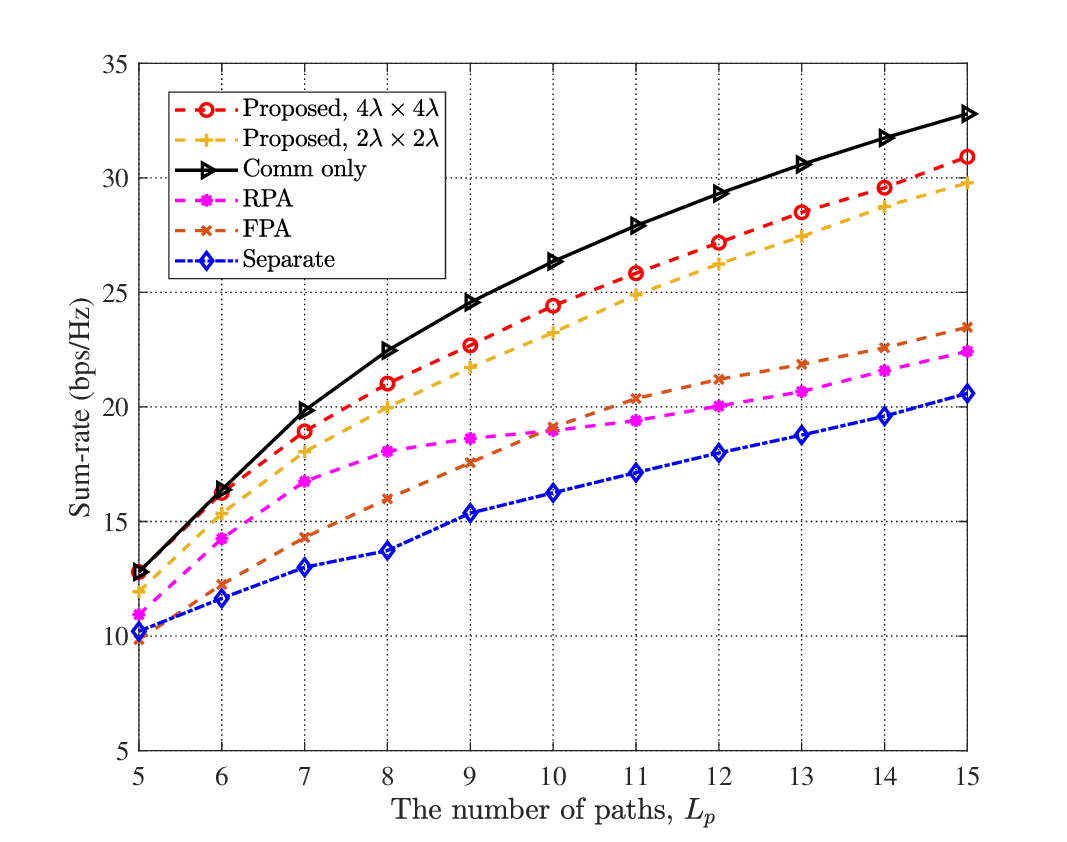}
		\caption{Sum-rate versus number of channel paths.} \label{fig::SR_vs_Lp}
	\end{center}
	\vspace{-1.5em}
\end{figure}
In Fig.~\ref{fig::SR_vs_Lp}, we compare the achievable sum-rate for the MA system with different number of paths, i.e., $L_p$. The results show that the sum-rate increases for all schemes as the number of paths increases. This is because stronger small-scale fading, which occurs with more paths, provides more pronounced channel spatial variation and thus enhances the sum-rate. Furthermore, the proposed MAs-aided approach consistently outperforms all baseline schemes, with the performance gaps widening as $L_p$ increases. Specifically, when $L_p = 5$, the performance gaps for the MA-aided approach with a $4\lambda$ receive region over the RPA and FPA schemes are $17\%$ and $29.95\%$, respectively; these gaps expand to $37.85\%$ and $31.69\%$ when $L_p = 15$.

\section{Conclusion}
In this paper, an MAs-assisted secure transmission scheme for RIS-ISAC systems was investigated, where an eavesdropping target attempts to intercept secrecy data. We formulated an optimization problem aimed at maximizing the sum-rate of all users by jointly optimizing the transmit beamformer, RIS reflection coefficients, receive filter, and MA positions. To tackle the highly non-convex problem, a two-layer penalty-based algorithm was proposed. Specifically, the inner layer alternately solved the penalized optimization problem, while the outer layer updated the penalty factor over iterations. Simulation results confirmed the effectiveness of the proposed MAs-assisted approach in enhancing security performance. Moreover, it was shown that enlarging the size of the receive region, increasing the number of paths, and adding more RIS elements demonstrate a further boost to the sum-rate of MAs-aided systems, thereby enhancing secrecy performance under given secrecy constraints.
%改
%In addition, the simulation results also showed the capability of the RIS in improving sensing and the physical layer security of ISAC systems.

\appendices

\section{Proof of Lemma 1} \label{app0}
Let us consider the inequality as follows
\begin{equation} \small \label{eq::tri}
	\left\|(\boldsymbol{\Lambda}_k - \mathbf{Q}_k)^{\frac{1}{2}}\mathbf{f}_k(\mathbf{u}_k) - (\boldsymbol{\Lambda}_k - \mathbf{Q}_k)^{\frac{1}{2}}\mathbf{f}_k(\mathbf{u}_k^{(t)})\right\|^2 \geq 0,
\end{equation}
where $\boldsymbol{\Lambda}_k \triangleq \lambda_{\max,3,k}\mathbf{I}_{L_k^r}$ with $\lambda_{\max,3,k}$ being the maximum eigenvalue of $\mathbf{Q}_k$. Since the matrix $\boldsymbol{\Lambda}_k - \mathbf{Q}_k$ is positive semidefinite by construction, we expand \eqref{eq::tri} as
\begin{equation} \small
	\begin{aligned}
		&\mathbf{f}_k^{\rm H}(\mathbf{u}_k)(\boldsymbol{\Lambda}_k - \mathbf{Q}_k)\mathbf{f}_k(\mathbf{u}_k) + \mathbf{f}_k^{\rm H}(\mathbf{u}_k^{(t)}) (\boldsymbol{\Lambda}_k - \mathbf{Q}_k) \mathbf{f}_k(\mathbf{u}_k^{(t)}) \\ &\quad \quad - 2\Re\{\mathbf{f}_k^{\rm H}(\mathbf{u}_k) (\boldsymbol{\Lambda}_k - \mathbf{Q}_k) \mathbf{f}_k(\mathbf{u}_k^{(t)})\} \geq 0.
	\end{aligned}
\end{equation}
By isolating the term $\mathbf{f}_k^{\rm H}(\mathbf{u}_k) \mathbf{Q}_k \mathbf{f}_k(\mathbf{u}_k)$, we obtain the bound in \eqref{eq::lemma_2}, this completes the proof.

\section{Derivation of $\nabla\hat{\psi}_k(\mathbf{u}_k)$ and $\nabla^2\hat{\psi}_k(\mathbf{u}_k)$} \label{app1}
% Appendix beginning
Let $\varsigma_{k,i}^{(t)}$ denotes the $i$-th element of $\boldsymbol{\varsigma}_k^{(t)}$, we have
% $\hat{\psi}_k(\mathbf{u}_k)$ can be rewritten as
\begin{equation} \small
	\hat{\psi}_k(\mathbf{u}_k) = 2 \sum_{i=1}^{L_k^r} \left|\varsigma_{k,i}^{(t)}\right|\cos\left(\nu^{(t)}_{k,i}(\mathbf{u}_k)\right) + \varepsilon_5,
\end{equation}
with $\nu^{(t)}_{k,i}(\mathbf{u}_k) \triangleq -\frac{2\pi}{\lambda}\rho^r_{k,i}(\mathbf{u}_k) + \angle\varsigma_{k,i}^{(t)}$ and $\rho^r_{k,i}(\mathbf{u}_k)$ is given by \eqref{eq::receive_FRV}.
The gradient vector $\nabla\hat{\psi}_k(\mathbf{u}_k)$ and Hessian matrix $\nabla^2\hat{\psi}_k(\mathbf{u}_k)$ can be represented as $\nabla\hat{\psi}_k(\mathbf{u}_k) = [\frac{\partial \hat{\psi}_k(\mathbf{u}_k)}{\partial x_k} \quad \frac{\partial \hat{\psi}_k(\mathbf{u}_k)}{\partial y_k}]^{\rm T}$ and $\nabla^2\hat{\psi}_k(\mathbf{u}_k) = \bigg[\begin{array}{cc}
	\frac{\partial^2 \hat{\psi}_k(\mathbf{u}_k)}{\partial x_k\partial x_k} & \frac{\partial^2 \hat{\psi}_k(\mathbf{u}_k)}{\partial x_k\partial y_k} \\
	\frac{\partial^2 \hat{\psi}_k(\mathbf{u}_k)}{\partial y_k\partial x_k} & \frac{\partial^2 \hat{\psi}_k(\mathbf{u}_k)}{\partial y_k\partial y_k} \end{array}\bigg]$.
Thus, we have

\begin{subequations} \label{eq::first_order} \small
	\begin{align}
		&\frac{\partial \hat{\psi}_k(\mathbf{u}_k)}{\partial x_k} =  \frac{4\pi}{\lambda} \sum_{i=1}^{L_k^r} \left|\varsigma_{k,i}^{(t)}\right|\sin\left(\nu^{(t)}_{k,i}(\mathbf{u}_k)\right)\sin\theta^r_{k,i}\cos\phi^r_{k,i}, \\
		&\frac{\partial \hat{\psi}_k(\mathbf{u}_k)}{\partial y_k} = 
		\frac{4\pi}{\lambda} \sum_{i=1}^{L_k^r} \left|\varsigma_{k,i}^{(t)}\right|\sin\left(\nu^{(t)}_{k,i}(\mathbf{u}_k)\right)\cos\theta^r_{k,i},
	\end{align}
\end{subequations}

\vspace{-1 em}
\begin{subequations} \label{eq::second_order} \small
	\begin{align}
		&\!\!\!\frac{\partial^2 \hat{\psi}_k(\mathbf{u}_k)}{\partial x_k\partial x_k} \!=\! -\frac{8\pi^2}{\lambda^2}\!\! \sum_{i=1}^{L_k^r}\! \left|\varsigma_{k,i}^{(t)}\right|\!\cos\left(\!\nu^{(t)}_{k,i}(\mathbf{u}_k)\!\right)\sin^2\!\theta^r_{k,i}\cos^2\!\phi^r_{k,i}, \\
		&\frac{\partial^2 \hat{\psi}_k(\mathbf{u}_k)}{\partial x_k\partial y_k} = \frac{\partial^2 \hat{\psi}_k(\mathbf{u}_k)}{\partial y_k\partial x_k} = \nonumber \\
		&-\!\frac{8\pi^2}{\lambda^2}\! \sum_{i=1}^{L_k^r} \left|\varsigma_{k,i}^{(t)}\right|\!\cos\left(\nu^{(t)}_{k,i}(\mathbf{u}_k)\right)\!\sin\theta^r_{k,i}\!\cos\phi^r_{k,i}\!\cos\theta^r_{k,i}, \\
		& 	\frac{\partial^2 \hat{\psi}_k(\mathbf{u}_k)}{\partial y_k\partial y_k} \!=\! -\frac{8\pi^2}{\lambda^2} \sum_{i=1}^{L_k^r} \left|\varsigma_{k,i}^{(t)}\right|\cos\left(\nu^{(t)}_{k,i}(\mathbf{u}_k)\right)\cos^2\theta^r_{k,i}.
	\end{align}
\end{subequations}

\section{Construction of $\delta_k$} \label{app2}
Since we have
\begin{equation} \small
	\begin{aligned}
		\left\|\nabla^2\hat{\psi}_k(\mathbf{u}_k)\right\|_2^2 & \!\leq\! 	\left\|\nabla^2\hat{\psi}_k(\mathbf{u}_k)\right\|_F^2  \!\leq\! 4\Big(\frac{8\pi^2}{\lambda^2}\sum_{i=1}^{L_k^r}\left|\varsigma_{k,i}^{(t)}\right|\Big)^2,
	\end{aligned}
\end{equation}
and
\begin{equation} \small
	\left\|\nabla^2\hat{\psi}_k(\mathbf{u}_k)\right\|_2\mathbf{I}_2 \succeq \nabla^2\hat{\psi}_k(\mathbf{u}_k),
\end{equation}
thus we can select $\delta_k$ as
\begin{equation} \label{eq::delta} \small
	\delta_k = \frac{16\pi^2}{\lambda^2}\sum_{i=1}^{L_k^r}\left|\varsigma_{k,i}^{(t)}\right|,
\end{equation}
which is satisfied the following condition
\begin{equation} \small
	\delta_k\mathbf{I}_2 \succeq \left\|\nabla^2\hat{\psi}_k(\mathbf{u}_k)\right\|_2\mathbf{I}_2 \succeq \nabla^2\hat{\psi}_k(\mathbf{u}_k).
\end{equation}

\bibliographystyle{IEEEtran}
% argument is your BibTeX string definitions and bibliography database(s)
\bibliography{./bibtex/IEEEabrv,./bibtex/IEEEexample}

% that's all folks%
\end{document}